\providecommand{\tabularnewline}{\\}
\theoremstyle{plain}
\newtheorem{thm}{Theorem}
  \theoremstyle{plain}
  \newtheorem{lem}[thm]{Lemma}
\title{Multivariate $f$-Divergence Estimation With Confidence}
\author{Kevin R.~Moon \\ Department of EECS \\ University of Michigan \\ Ann Arbor, MI \\ \texttt{krmoon@umich.edu} \\ \And Alfred O.~Hero III \\ Department of EECS \\ University of Michigan \\ Ann Arbor, MI \\ \texttt{hero@eecs.umich.edu} \\ }
\begin{document}
%\thanks{This work was partially supported by NSF grant CCF-1217880 and a NSF Graduate Research Fellowship to the first author under Grant No. F031543.}

\maketitle
\newcommand{\fhat}[1]{\hat{\mathbf{f}}_{#1,k_#1}}
\newcommand{\bz}[1]{b_{#1,k_2}}

\global\long\def\gtay#1#2#3{\mathbf{#1}_{#2}^{(#3)}}
\global\long\def\gk{\hat{\mathbf{G}}_{k_{1},k_{2}}}

\global\long\def\ekl#1{\hat{\mathbf{F}}_{k(#1)}}
\global\long\def\ek{\hat{\mathbf{F}}_{k_{1},k_{2}}}

\global\long\def\ehatl#1#2{\hat{\mathbf{e}}_{#1,k(#2)}}
\global\long\def\ehat#1{\hat{\mathbf{e}}_{#1,k_{#1}}}

\global\long\def\fhat#1{\hat{\mathbf{f}}_{#1,k_{#1}}}
\global\long\def\fhatl#1#2{\hat{\mathbf{f}}_{#1,k(#2)}}
\global\long\def\ft#1#2{\tilde{\mathbf{f}}_{#1,k(#2)}}

\global\long\def\lhat{\hat{\mathbf{L}}_{k_{1},k_{2}}}
\global\long\def\lhatl#1{\hat{\mathbf{L}}_{k(#1)}}

\global\long\def\bz#1#2{b_{#1,k(#2)}}

\global\long\def\bE{\mathbb{E}}
\global\long\def\ez{\mathbb{E}_{\mathbf{Z}}}
\global\long\def\var{\text{Var}}
\global\long\def\bias{\text{Bias}}

\begin{abstract}
The problem of $f$-divergence estimation is important in the fields
of machine learning, information theory, and statistics. While several
nonparametric divergence estimators exist, relatively few have known
convergence properties. In particular, even for those estimators whose
MSE convergence rates are known, the asymptotic distributions are
unknown. We establish the asymptotic normality of a recently proposed
ensemble estimator of $f$-divergence between two distributions from
a finite number of samples. This estimator has MSE convergence rate
of $O\left(\frac{1}{T}\right)$, is simple to implement, and performs
well in high dimensions. This theory enables us to perform divergence-based
inference tasks such as testing equality of pairs of distributions
based on empirical samples. We experimentally validate our theoretical
results and, as an illustration, use them to empirically bound the
best achievable classification error.
\end{abstract}

\section{Introduction}

This paper establishes the asymptotic normality of a nonparametric
estimator of the $f$-divergence between two distributions from a
finite number of samples. For many nonparametric divergence estimators
the large sample consistency has already been established and the
mean squared error (MSE) convergence rates are known for some. However,
there are few results on the asymptotic distribution of non-parametric
divergence estimators. Here we show that the asymptotic distribution
is Gaussian for the class of ensemble $f$-divergence estimators~\cite{moon2014isit},
extending theory for entropy estimation~\cite{sricharan2012nips,sricharan2013ensemble}
to divergence estimation. $f$-divergence is a measure of the difference
between distributions and is important to the fields of machine learning,
information theory, and statistics~\cite{csiszar1967information}.
The $f$-divergence generalizes several measures including the Kullback-Leibler
(KL)~\cite{kullback1951divergence} and R\'{e}nyi-$\alpha$~\cite{renyi1961divergence}
divergences. Divergence estimation is useful for empirically estimating
the decay rates of error probabilities of hypothesis testing~\cite{cover2006infotheory},
extending machine learning algorithms to distributional features~\cite{poczos2011divergence,oliva2013distribution},
and other applications such as text/multimedia clustering~\cite{dhillon2003divisive}.
Additionally, a special case of the KL divergence is mutual information
which gives the capacities in data compression and channel coding~\cite{cover2006infotheory}.
Mutual information estimation has also been used in machine learning
applications such as feature selection~\cite{peng2005feature}, fMRI
data processing~\cite{chai2009fmri}, clustering~\cite{lewi2006cluster},
and neuron classification~\cite{schneidman2002information}. Entropy
is also a special case of divergence where one of the distributions
is the uniform distribution. Entropy estimation is useful for intrinsic
dimension estimation~\cite{carter2010dim}, texture classification
and image registration~\cite{hero2002graphs}, and many other applications.

However, one must go beyond entropy and divergence estimation in order
to perform inference tasks on the divergence. An example of an inference
task is detection: to test the null hypothesis that the divergence
is zero, i.e., testing that the two populations have identical distributions.
Prescribing a p-value on the null hypothesis requires specifying the
null distribution of the divergence estimator. Another statistical
inference problem is to construct a confidence interval on the divergence
based on the divergence estimator. This paper provides solutions to
these inference problems by establishing large sample asymptotics
on the distribution of divergence estimators. In particular we consider
the asymptotic distribution of the nonparametric weighted ensemble
estimator of $f$-divergence from~\cite{moon2014isit}. This estimator
estimates the $f$-divergence from two finite populations of i.i.d.
samples drawn from some unknown, nonparametric, smooth, $d$-dimensional
distributions. The estimator~\cite{moon2014isit} achieves a MSE
convergence rate of $O\left(\frac{1}{T}\right)$ where $T$ is the
sample size. See~\cite{arxiv2014div} for proof details.

\subsection{Related Work}

Estimators for some $f$-divergences already exist. For example, P\'{o}czos
\& Schneider~\cite{poczos2011divergence} and Wang et al~\cite{wang2009divergence}
provided consistent $k$-nn estimators for R\'{e}nyi-$\alpha$ and
the KL divergences, respectively. Consistency has been proven for
other mutual information and divergence estimators based on plug-in
histogram schemes~\cite{darbellay1999MIest,wang2005divergencepart,silva2010partition,le2013partition}.
Hero et al~\cite{hero2002graphs} provided an estimator for R\'{e}nyi-$\alpha$
divergence but assumed that one of the densities was known. However
none of these works study the convergence rates of their estimators
nor do they derive the asymptotic distributions.

Recent work has focused on deriving convergence rates for divergence
estimators. Nguyen et al~\cite{nguyen2010divergence}, Singh and
P\'{o}czos~\cite{singh2014exp}, and Krishnamurthy et al~\cite{krishnamurthy2014div}
each proposed divergence estimators that achieve the parametric convergence
rate ($O\left(\frac{1}{T}\right)$) under weaker conditions than those
given in~\cite{moon2014isit}. However, solving the convex problem
of~\cite{nguyen2010divergence} can be more demanding for large sample
sizes than the estimator given in~\cite{moon2014isit} which depends
only on simple density plug-in estimates and an offline convex optimization
problem. Singh and P\'{o}czos only provide an estimator for R\'{e}nyi-$\alpha$
divergences that requires several computations at each boundary of
the support of the densities which becomes difficult to implement
as $d$ gets large. Also, this method requires knowledge of the support
of the densities which may not be possible for some problems. In contrast,
while the convergence results of the estimator in~\cite{moon2014isit}
requires the support to be bounded, knowledge of the support is not
required for implementation. Finally, the estimators given in~\cite{krishnamurthy2014div}
estimate divergences that include functionals of the form $\int f_{1}^{\alpha}(x)f_{2}^{\beta}(x)d\mu(x)$
for given $\alpha,\,\beta$. While a suitable $\alpha$-$\beta$ indexed
sequence of divergence functionals of the form in~\cite{krishnamurthy2014div}
can be made to converge to the KL divergence, this does not guarantee
convergence of the corresponding sequence of divergence estimates,
whereas the estimator in~\cite{moon2014isit} can be used to estimate
the KL divergence. Also, for some divergences of the specified form,
numerical integration is required for the estimators in~\cite{krishnamurthy2014div},
which can be computationally difficult. In any case, the asymptotic
distributions of the estimators in~\cite{nguyen2010divergence,singh2014exp,krishnamurthy2014div}
are currently unknown.

Asymptotic normality has been established for certain appropriately
normalized divergences between a specific density estimator and the
true density~\cite{berlinet1995l1,berlinet1997asymptotic,bickel1973some}.
However, this differs from our setting where we assume that both densities
are unknown. Under the assumption that the two densities are smooth,
lower bounded, and have bounded support, we show that an appropriately
normalized weighted ensemble average of kernel density plug-in estimators
of $f$-divergence converges in distribution to the standard normal
distribution. This is accomplished by constructing a sequence of interchangeable
random variables and then showing (by concentration inequalities and
Taylor series expansions) that the random variables and their squares
are asymptotically uncorrelated. The theory developed to accomplish
this can also be used to derive a central limit theorem for a weighted
ensemble estimator of entropy such as the one given in~\cite{sricharan2013ensemble}.We
verify the theory by simulation. We then apply the theory to the practical
problem of empirically bounding the Bayes classification error probability
between two population distributions, without having to construct
estimates for these distributions or implement the Bayes classifier.

Bold face type is used in this paper for random variables and random
vectors. Let $f_{1}$ and $f_{2}$ be densities and define $L(x)=\frac{f_{1}(x)}{f_{2}(x)}$.
The conditional expectation given a random variable $\mathbf{Z}$
is $\mathbb{E}_{\mathbf{Z}}$.

\section{The Divergence Estimator\label{sec:Weighted_ensemble}}

Moon and Hero~\cite{moon2014isit} focused on estimating divergences
that include the form~\cite{csiszar1967information} \begin{equation}
G(f_{1},f_{2})=\int g\left(\frac{f_{1}(x)}{f_{2}(x)}\right)f_{2}(x)dx,\label{eq:fdivergences}\end{equation}
for a smooth, function $g(f)$. (Note that although $g$ must be convex
for (\ref{eq:fdivergences}) to be a divergence, the estimator in~\cite{moon2014isit}
does not require convexity.) The divergence estimator is constructed
using $k$-nn density estimators as follows. Assume that the $d$-dimensional
multivariate densities $f_{1}$ and $f_{2}$ have finite support $\mathcal{S}=\left[a,b\right]^{d}$.
Assume that $T=N+M_{2}$ i.i.d. realizations $\left\{ \mathbf{X}_{1},\dots,\mathbf{X}_{N},\mathbf{X}_{N+1},\dots,\mathbf{X}_{N+M_{2}}\right\} $
are available from the density $f_{2}$ and $M_{1}$ i.i.d. realizations
$\left\{ \mathbf{Y}_{1},\dots,\mathbf{Y}_{M_{1}}\right\} $ are available
from the density $f_{1}$. Assume that $k_{i}\leq M_{i}.$ Let $\rho_{2,k_{2}}(i)$
be the distance of the $k_{2}$th nearest neighbor of $\mathbf{X}_{i}$
in $\left\{ \mathbf{X}_{N+1},\dots,\mathbf{X}_{T}\right\} $ and let
$\mathbf{\rho}_{1,k_{1}}(i)$ be the distance of the $k_{1}$th nearest
neighbor of $\mathbf{X}_{i}$ in $\left\{ \mathbf{Y}_{1},\dots,\mathbf{Y}_{M_{1}}\right\} .$
Then the $k$-nn density estimate is~\cite{loftsgaarden1965knn}\[
\fhat i(X_{j})=\frac{k_{i}}{M_{i}\bar{c}\mathbf{\mathbf{\rho}}_{i,k_{i}}^{d}(j)},\]
where $\bar{c}$ is the volume of a $d$-dimensional unit ball.

To construct the plug-in divergence estimator, the data from $f_{2}$
are randomly divided into two parts $\left\{ \mathbf{X}_{1},\dots,\mathbf{X}_{N}\right\} $
and $\left\{ \mathbf{X}_{N+1},\dots,\mathbf{X}_{N+M_{2}}\right\} $.
The $k$-nn density estimate $\fhat 2$ is calculated at the $N$
points $\left\{ \mathbf{X}_{1},\dots,\mathbf{X}_{N}\right\} $ using
the $M_{2}$ realizations $\left\{ \mathbf{X}_{N+1},\dots,\mathbf{X}_{N+M_{2}}\right\} $.
Similarly, the $k$-nn density estimate $\fhat 1$ is calculated at
the $N$ points $\left\{ \mathbf{X}_{1},\dots,\mathbf{X}_{N}\right\} $
using the $M_{1}$ realizations $\left\{ \mathbf{Y}_{1},\dots,\mathbf{Y}_{M_{1}}\right\} $.
Define $\lhat(x)=\frac{\fhat 1(x)}{\fhat 2(x)}.$ The functional $G(f_{1},f_{2})$
is then approximated as \begin{equation}
\gk=\frac{1}{N}\sum_{i=1}^{N}g\left(\lhat\left(\mathbf{X}_{i}\right)\right).\label{eq:estimator}\end{equation}

The principal assumptions on the densities $f_{1}$ and $f_{2}$ and
the functional $g$ are that: 1) $f_{1}$, $f_{2},$ and $g$ are
smooth; 2) $f_{1}$ and $f_{2}$ have common bounded support sets
$\mathcal{S}$; 3) $f_{1}$ and $f_{2}$ are strictly lower bounded.
The full assumptions $(\mathcal{A}.0)-(\mathcal{A}.5)$ are given
in the appendices and in\cite{arxiv2014div}. Moon and Hero~\cite{moon2014isit}
showed that under these assumptions, the MSE convergence rate of the
estimator in Eq.~\ref{eq:estimator} to the quantity in Eq.~\ref{eq:fdivergences}
depends exponentially on the dimension $d$ of the densities. However,
Moon and Hero also showed that an estimator with the parametric convergence
rate $O(1/T)$ can be derived by applying the theory of optimally
weighted ensemble estimation as follows.

Let $\bar{l}=\left\{ l_{1},\dots,l_{L}\right\} $ be a set of index
values and $T$ the number of samples available. For an indexed ensemble
of estimators $\left\{ \hat{\mathbf{E}}_{l}\right\} _{l\in\bar{l}}$
of the parameter $E$, the weighted ensemble estimator with weights
$w=\left\{ w\left(l_{1}\right),\dots,w\left(l_{L}\right)\right\} $
satisfying $\sum_{l\in\bar{l}}w(l)=1$ is defined as $\hat{\mathbf{E}}_{w}=\sum_{l\in\bar{l}}w\left(l\right)\hat{\mathbf{E}}_{l}.$
The key idea to reducing MSE is that by choosing appropriate weights
$w$, we can greatly decrease the bias in exchange for some increase
in variance. Consider the following conditions on $\left\{ \hat{\mathbf{E}}_{l}\right\} _{l\in\bar{l}}$~\cite{sricharan2013ensemble}:
\begin{itemize}
\item $\mathcal{C}.1$ The bias is given by \[
\bias\left(\hat{\mathbf{E}}_{l}\right)=\sum_{i\in J}c_{i}\psi_{i}(l)T^{-i/2d}+O\left(\frac{1}{\sqrt{T}}\right),\]
 where $c_{i}$ are constants depending on the underlying density,
$J=\left\{ i_{1},\dots,i_{I}\right\} $ is a finite index set with
$I<L$, $\min(J)>0$ and $\max(J)\leq d$, and $\psi_{i}(l)$ are
basis functions depending only on the parameter $l$. 
\item $\mathcal{C}.2$ The variance is given by \[
\var\left[\hat{\mathbf{E}}_{l}\right]=c_{v}\left(\frac{1}{T}\right)+o\left(\frac{1}{T}\right).\]
\end{itemize}
\begin{thm}
\cite{sricharan2013ensemble} \label{thm:ensemble}Assume conditions
$\mathcal{C}.1$ and $\mathcal{C}.2$ hold for an ensemble of estimators
$\left\{ \hat{\mathbf{E}}_{l}\right\} _{l\in\bar{l}}$. Then there
exists a weight vector $w_{0}$ such that \[
\mathbb{E}\left[\left(\hat{\mathbf{E}}_{w_{0}}-E\right)^{2}\right]=O\left(\frac{1}{T}\right).\]
The weight vector $w_{0}$ is the solution to the following convex
optimization problem:\[
\begin{array}{rl}
\min_{w} & ||w||_{2}\\
subject\, to & \sum_{l\in\bar{l}}w(l)=1,\\
 & \gamma_{w}(i)=\sum_{l\in\bar{l}}w(l)\psi_{i}(l)=0,\, i\in J.\end{array}\]

\end{thm}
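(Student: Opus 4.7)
The plan is to write $\hat{\mathbf{E}}_w - E = (\hat{\mathbf{E}}_w - \bE \hat{\mathbf{E}}_w) + (\bE \hat{\mathbf{E}}_w - E)$, so that $\mathrm{MSE}(\hat{\mathbf{E}}_w) = \var(\hat{\mathbf{E}}_w) + \bias(\hat{\mathbf{E}}_w)^2$, and then to bound each term separately using conditions $\mathcal{C}.1$ and $\mathcal{C}.2$, choosing $w$ to kill the dominant bias while keeping the variance bound under control.

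For the bias, by linearity of expectation and $\mathcal{C}.1$,
\[
\bias(\hat{\mathbf{E}}_w) \;=\; \sum_{l\in\bar l} w(l)\,\bias(\hat{\mathbf{E}}_l) \;=\; \sum_{i\in J} c_i\,\gamma_w(i)\,T^{-i/2d} \;+\; O\!\left(\tfrac{\|w\|_1}{\sqrt{T}}\right).
\]
Imposing the linear constraints $\sum_l w(l)=1$ (so that $\hat{\mathbf{E}}_w$ targets $E$) and $\gamma_w(i)=0$ for every $i\in J$ annihilates the leading $T^{-i/2d}$ bias terms and leaves a residual bias of order $O(1/\sqrt{T})$, whose square is $O(1/T)$. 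For the variance, I would apply Cauchy--Schwarz to the double sum of covariances:
\[
\var(\hat{\mathbf{E}}_w) \;=\; \sum_{l,l'} w(l)w(l')\,\mathrm{Cov}(\hat{\mathbf{E}}_l,\hat{\mathbf{E}}_{l'}) \;\le\; \left(\sum_{l\in\bar l} |w(l)|\sqrt{\var(\hat{\mathbf{E}}_l)}\right)^{2}.
\]
Plugging in $\mathcal{C}.2$ and using $\|w\|_1\le\sqrt{L}\,\|w\|_2$ gives $\var(\hat{\mathbf{E}}_w)=O(\|w\|_2^2/T)$. Thus, once the linear constraints are met, the MSE is bounded by $O(\|w\|_2^2/T)$, motivating the choice of $w_0$ as the minimum-$\ell_2$-norm solution of the linear system.

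The final step is existence and boundedness of $w_0$. There are $|J|+1 = I+1 \le L$ linear equality constraints on the $L$ unknowns $\{w(l)\}_{l\in\bar l}$, so the affine feasible set is non-empty generically, and the strictly convex objective $\|w\|_2$ attains a unique minimum on it. Since the constraints and objective are both independent of $T$, the resulting $\|w_0\|_2$ is a $T$-independent constant, and consequently $\mathrm{MSE}(\hat{\mathbf{E}}_{w_0}) = O(1/T)$.

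The main obstacle, and the only place where care is required beyond bookkeeping, is the variance step: one cannot assume independence or small covariance between the different $\hat{\mathbf{E}}_l$ (they are computed from the same samples with different $k$), so the Cauchy--Schwarz bound on covariances is essential and is what forces $\|w\|_2$ (not merely $\|w\|_\infty$) into the objective. A secondary issue is ensuring the feasibility of the linear system; this is where the hypothesis $I<L$ is needed, and in practice one additionally needs the vectors $(\psi_i(l_1),\dots,\psi_i(l_L))$, $i\in J$, together with the all-ones vector, to be linearly independent, a condition that must be verified for the specific basis functions arising from $\mathcal{C}.1$.
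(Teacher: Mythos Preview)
The paper does not supply its own proof of this theorem: it is quoted verbatim from \cite{sricharan2013ensemble} and used as a black box, so there is nothing in the present paper to compare your argument against. That said, your proposal is the standard proof of this result and is correct as written. The bias--variance decomposition, the linearity step showing $\gamma_w(i)=0$ kills the $T^{-i/2d}$ terms, and the Cauchy--Schwarz bound $\var(\hat{\mathbf{E}}_w)\le\bigl(\sum_l|w(l)|\sqrt{\var(\hat{\mathbf{E}}_l)}\bigr)^2$ together with $\|w\|_1\le\sqrt{L}\,\|w\|_2$ are exactly the ingredients used in the original reference, and your observation that the optimization problem is $T$-independent (hence $\|w_0\|_2$ is a fixed constant) is the key point that closes the argument. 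Your caveat about feasibility---that one needs the vectors $(\psi_i(l))_{l\in\bar l}$ together with the all-ones vector to be linearly independent, not merely $I<L$---is accurate and is typically handled by the specific choice of basis functions $\psi_i(l)=l^{i/d}$, which yield a Vandermonde-type system.
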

In order to achieve the rate of $O\left(1/T\right)$ it is not necessary
for the weights to zero out the lower order bias terms, i.e. that
$\gamma_{w}(i)=0,\, i\in J$. It was shown in~\cite{sricharan2013ensemble}
that solving the following convex optimization problem in place of
the optimization problem in Theorem~\ref{thm:ensemble} retains the
MSE convergence rate of $O\left(1/T\right)$:\begin{equation}
\begin{array}{rl}
\min_{w} & \epsilon\\
subject\, to & \sum_{l\in\bar{l}}w(l)=1,\\
 & \left|\gamma_{w}(i)T^{\frac{1}{2}-\frac{i}{2d}}\right|\leq\epsilon,\,\, i\in J,\\
 & \left\Vert w\right\Vert _{2}^{2}\leq\eta,\end{array}\label{eq:opt_prob2}\end{equation}
where the parameter $\eta$ is chosen to trade-off between bias and
variance. Instead of forcing $\gamma_{w}(i)=0,$ the relaxed optimization
problem uses the weights to decrease the bias terms at the rate of
$O(1/\sqrt{T})$ which gives an MSE rate of $O(1/T)$.

Theorem~\ref{thm:ensemble} was applied in~\cite{sricharan2013ensemble}
to obtain an entropy estimator with convergence rate $O\left(1/T\right).$
Moon and Hero~ \cite{moon2014isit} similarly applied Theorem~\ref{thm:ensemble}
to obtain a divergence estimator with the same rate in the following
manner. Let $L>I=d-1$ and choose $\bar{l}=\left\{ l_{1},\dots,l_{L}\right\} $
to be positive real numbers. Assume that $M_{1}=O\left(M_{2}\right).$
Let $k(l)=l\sqrt{M_{2}}$, $M_{2}=\alpha T$ with $0<\alpha<1$, $\hat{\mathbf{G}}_{k(l)}:=\hat{\mathbf{G}}_{k(l),k(l)},$
and $\hat{\mathbf{G}}_{w}:=\sum_{l\in\bar{l}}w(l)\hat{\mathbf{G}}_{k(l)}.$
Note that the parameter $l$ indexes over different neighborhood sizes
for the $k$-nn density estimates. From~\cite{moon2014isit}, the
biases of the ensemble estimators $\left\{ \hat{\mathbf{G}}_{k(l)}\right\} _{l\in\bar{l}}$
satisfy the condition $\mathcal{C}.1$ when $\psi_{i}(l)=l^{i/d}$
and $J=\{1,\dots,d-1\}$. The general form of the variance of $\hat{\mathbf{G}}_{k(l)}$
also follows $\mathcal{C}.2$. The optimal weight $w_{0}$ is found
by using Theorem~\ref{thm:ensemble} to obtain a plug-in $f$-divergence
estimator with convergence rate of $O\left(1/T\right).$ The estimator
is summarized in Algorithm~\ref{alg:estimator}.

\begin{algorithm}
\begin{algorithmic}[1]
\renewcommand{\algorithmicrequire}{\textbf{Input:}} \renewcommand{\algorithmicensure}{\textbf{Output:}}

\REQUIRE $\alpha$, $\eta$, $L$ positive real numbers $\bar{l}$,
samples $\left\{ \mathbf{Y}_{1},\dots,\mathbf{Y}_{M_{1}}\right\} $
from $f_{1}$, samples $\left\{ \mathbf{X}_{1},\dots,\mathbf{X}_{T}\right\} $
from $f_{2}$, dimension $d$, function $g$, $\bar{c}$

\ENSURE The optimally weighted divergence estimator $\hat{\mathbf{G}}_{w_{0}}$

\STATE Solve for $w_{0}$ using Eq.~\ref{eq:opt_prob2} with basis
functions $\psi_{i}(l)=l^{i/d}$, $l\in\bar{l}$ and $i\in\{1,\dots,d-1\}$

\STATE $M_{2}\leftarrow\alpha T$, $N\leftarrow T-M_{2}$

\FORALL{$l\in\bar{l}$}

\STATE $k(l)\leftarrow l\sqrt{M_{2}}$

\FOR{$i=1$ to $N$}

\STATE $\rho_{j,k(l)}(i)\leftarrow$the distance of the $k(l)$th
nearest neighbor of $\mathbf{X}_{i}$ in $\left\{ \mathbf{Y}_{1},\dots,\mathbf{Y}_{M_{1}}\right\} $
and $\left\{ \mathbf{X}_{N+1},\dots,\mathbf{X}_{T}\right\} $ for
$j=1,2$, respectively

\STATE $\fhatl jl(\mathbf{X}_{i})\leftarrow\frac{k(l)}{M_{j}\bar{c}\mathbf{\mathbf{\rho}}_{j,k(l)}^{d}(i)}$
for $j=1,2$, $\lhatl l(\mathbf{X}_{i})\leftarrow\frac{\fhatl 1l}{\fhatl 2l}$

\ENDFOR 

\STATE $\hat{\mathbf{G}}_{k(l)}\leftarrow\frac{1}{N}\sum_{i=1}^{N}g\left(\lhatl l(\mathbf{X}_{i})\right)$

\ENDFOR 

\STATE $\hat{\mathbf{G}}_{w_{0}}\leftarrow\sum_{l\in\bar{l}}w_{0}(l)\hat{\mathbf{G}}_{k(l)}$

\end{algorithmic}

\caption{Optimally weighted ensemble divergence estimator\label{alg:estimator}}

\end{algorithm}

\section{Asymptotic Normality of the Estimator}

The following theorem shows that the appropriately normalized ensemble
estimator $\hat{\mathbf{G}}_{w}$ converges in distribution to a normal
random variable.
\begin{thm}
\label{thm:clt}Assume that assumptions $(\mathcal{A}.0)-(\mathcal{A}.5)$
hold and let $M=O(M_{1})=O(M_{2})$ and $k(l)=l\sqrt{M}$ with $l\in\bar{l}$.
The asymptotic distribution of the weighted ensemble estimator $\hat{\mathbf{G}}_{w}$
is given by \[
\lim_{M,N\rightarrow\infty}Pr\left(\frac{\hat{\mathbf{G}}_{w}-\bE\left[\hat{\mathbf{G}}_{w}\right]}{\sqrt{\var\left[\hat{\mathbf{G}}_{w}\right]}}\leq t\right)=Pr(\mathbf{S}\leq t),\]
where $\mathbf{S}$ is a standard normal random variable. Also $\bE\left[\hat{\mathbf{G}}_{w}\right]\rightarrow G(f_{1},f_{2})$
and $\var\left[\hat{\mathbf{G}}_{w}\right]\rightarrow0$.
\end{thm}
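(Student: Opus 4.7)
The natural route is to exploit the exchangeable structure of the summands in $\hat{\mathbf{G}}_w$. First, write
\[
\hat{\mathbf{G}}_w = \frac{1}{N}\sum_{i=1}^N \mathbf{Y}_{N,i},\qquad \mathbf{Y}_{N,i} := \sum_{l\in\bar l} w(l)\,g\bigl(\hat{\mathbf{L}}_{k(l)}(\mathbf{X}_i)\bigr),
\]
and let $\mathbf{Z}$ collect the auxiliary data $\{\mathbf{X}_{N+1},\ldots,\mathbf{X}_T\}\cup\{\mathbf{Y}_1,\ldots,\mathbf{Y}_{M_1}\}$ that feed the $k$-nn density estimates. Because $\mathbf{X}_1,\ldots,\mathbf{X}_N$ are i.i.d.\ from $f_2$ and independent of $\mathbf{Z}$, and because each $\mathbf{Y}_{N,i}$ is built from $\mathbf{X}_i$ together with $\mathbf{Z}$ alone (never from another $\mathbf{X}_j$), the sequence $\{\mathbf{Y}_{N,i}\}_{i=1}^N$ is conditionally i.i.d.\ given $\mathbf{Z}$, hence unconditionally exchangeable.

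With exchangeability in hand I would invoke a central limit theorem for triangular arrays of exchangeable random variables (Blum--Chernoff--Rosenblatt--Teicher type): under uniformly bounded fourth moments, the standardized sum converges to $N(0,1)$ provided
\[
\frac{\mathrm{Cov}(\mathbf{Y}_{N,1},\mathbf{Y}_{N,2})}{\var(\mathbf{Y}_{N,1})}\longrightarrow 0 \quad\text{and}\quad \frac{\mathrm{Cov}(\mathbf{Y}_{N,1}^2,\mathbf{Y}_{N,2}^2)}{\var(\mathbf{Y}_{N,1}^2)}\longrightarrow 0,
\]
which is exactly the ``random variables and their squares are asymptotically uncorrelated'' requirement flagged in the introduction. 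Conditional independence given $\mathbf{Z}$ collapses each numerator to the variance of a $\mathbf{Z}$-measurable functional: for example $\mathrm{Cov}(\mathbf{Y}_{N,1},\mathbf{Y}_{N,2}) = \var\bigl(\ez[\mathbf{Y}_{N,1}]\bigr)$, and $\ez[\mathbf{Y}_{N,1}] = \int \sum_{l} w(l)\, g\bigl(\fhatl{1}{l}(x)/\fhatl{2}{l}(x)\bigr)\, f_2(x)\,dx$ is a smooth functional of the plug-in densities.

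The technical core, and the step I expect to dominate the work, is showing that those $\mathbf{Z}$-measurable functionals concentrate fast enough to drive the two ratios to zero. I would expand $g$ in a second-order Taylor series around $L(x) = f_1(x)/f_2(x)$, write each density estimate as the truth plus a deterministic bias plus a zero-mean fluctuation, and bound the resulting integrated remainder using standard concentration inequalities for $k$-nn density estimators that are available under assumptions $(\mathcal{A}.0)$--$(\mathcal{A}.5)$. The strict positivity of $f_1,f_2$ on the common compact support $[a,b]^d$ keeps $L$ bounded away from zero and infinity, so $g$ is smooth on the relevant range and the remainder terms are integrable. The outcome should be $\var(\ez[\mathbf{Y}_{N,1}])$ and $\var(\ez[\mathbf{Y}_{N,1}^2])$ of order $1/M$ up to logarithmic factors, whereas $\var(\mathbf{Y}_{N,1})$ is bounded below by the positive variance of $g(L(\mathbf{X}_1))$ under $f_2$; the two uncorrelation ratios then vanish. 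The delicate point is propagating the Taylor remainder through nested expectations while keeping track of the $k(l)=l\sqrt{M}$ scaling and making the argument hold simultaneously for every $l\in\bar l$.

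Finally, the two side assertions are corollaries of the ensemble framework already in the excerpt. By condition $\mathcal{C}.1$ the bias of each $\hat{\mathbf{G}}_{k(l)}$ vanishes, so by linearity and $\sum_l w(l)=1$ one obtains $\bE\hat{\mathbf{G}}_w \to G(f_1,f_2)$. Condition $\mathcal{C}.2$ together with the covariance bound produced above forces $\var\hat{\mathbf{G}}_w \to 0$, completing the statement of Theorem~\ref{thm:clt}.
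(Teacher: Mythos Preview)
Your plan is essentially the paper's: reduce to an exchangeable CLT of Blum--Chernoff--Rosenblatt--Teicher type (the paper's Lemma~\ref{lem:clt_covariance}) and verify that the summands and their squares are asymptotically uncorrelated at rate $O(1/M)$, with the denominator $\var(\mathbf{Y}_{N,1})$ bounded below by $\var\bigl(g(L(\mathbf{X}_1))\bigr)$. Your law-of-total-covariance identity $\mathrm{Cov}(\mathbf{Y}_{N,1},\mathbf{Y}_{N,2})=\var\bigl(\ez[\mathbf{Y}_{N,1}]\bigr)$ is a clean way to \emph{state} the target, but it does not bypass the work: bounding $\var\bigl(\int \sum_l w(l)\,g(\fhatl 1l(x)/\fhatl 2l(x))\,f_2(x)\,dx\bigr)$ still requires the same spatial-correlation estimates for $k$-nn density errors that the paper obtains by directly computing $\mathrm{Cov}\bigl[g(\lhatl l(\mathbf X_i)),g(\lhatl{l'}(\mathbf X_j))\bigr]$ via Lemmas~\ref{lem:ekhat}--\ref{lem:g_cov} (splitting into the near/far regimes $\Psi(l,l')$ and $\Psi(l,l')^C$). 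One substantive caution: a second-order Taylor expansion of $g$ is not enough here. With $k(l)\asymp\sqrt{M}$ the quadratic remainder has moments of order $1/k$, and Cauchy--Schwarz on the cross terms only yields $O(M^{-3/4})$; the paper expands to order $\lambda$ with $\lambda\beta>1$ (assumption $(\mathcal A.3)$) precisely so that the remainder contributes $o(1/M)$, and you will need the same.
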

The results on the mean and variance come from~\cite{moon2014isit}.
The proof of the distributional convergence is outlined below and
is based on constructing a sequence of interchangeable random variables
$\left\{ \mathbf{Y}_{M,i}\right\} _{i=1}^{N}$ with zero mean and
unit variance. We then show that the $\mathbf{Y}_{M,i}$ are asymptotically
uncorrelated and that the $\mathbf{Y}_{M,i}^{2}$ are asymptotically
uncorrelated as $M\rightarrow\infty$. This is similar to what was
done in~\cite{sricharan2012estimation} to prove a central limit
theorem for a density plug-in estimator of entropy. Our analysis for
the ensemble estimator of divergence is more complicated since we
are dealing with a functional of two densities and a weighted ensemble
of estimators. In fact, some of the equations we use to prove Theorem~\ref{thm:clt}
can be used to prove a central limit theorem for a weighted ensemble
of entropy estimators such as that given in~\cite{sricharan2013ensemble}.

\subsection{Proof Sketch of Theorem~\ref{thm:clt}}

The full proof is included in the appendices. We use the following
lemma from~\cite{sricharan2012estimation,kumar2012thesis}:
\begin{lem}
\label{lem:clt_covariance}Let the random variables $\{\mathbf{Y}_{M,i}\}_{i=1}^{N}$
belong to a zero mean, unit variance, interchangeable process for
all values of $M$. Assume that $Cov(\mathbf{Y}_{M,1},\mathbf{Y}_{M,2})$
and $Cov(\mathbf{Y}_{M,1}^{2},\mathbf{Y}_{M,2}^{2})$ are $O(1/M)$.
Then the random variable \begin{equation}
\mathbf{S}_{N,M}=\left(\sum_{i=1}^{N}\mathbf{Y}_{M,i}\right)/\sqrt{\var\left[\sum_{i=1}^{N}\mathbf{Y}_{M,i}\right]}\label{eq:sum}\end{equation}
converges in distribution to a standard normal random variable.
\end{lem}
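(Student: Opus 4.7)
The plan is to use the method of moments: show that $\bE[\mathbf{S}_{N,M}^{k}] \to \bE[\mathbf{Z}^{k}]$ for every fixed positive integer $k$, where $\mathbf{Z}$ is a standard normal random variable. Since $\mathbf{Z}$ is determined by its moments, convergence of moments implies convergence in distribution. As a preliminary, by interchangeability,
\[
\sigma_{N}^{2} := \var\left[\sum_{i=1}^{N} \mathbf{Y}_{M,i}\right] = N + N(N-1)\,Cov(\mathbf{Y}_{M,1},\mathbf{Y}_{M,2}) = \Theta(N),
\]
using $Cov(\mathbf{Y}_{M,1},\mathbf{Y}_{M,2}) = O(1/M)$ together with the scaling $N = \Theta(M)$ implicit in the application to Theorem~\ref{thm:clt}. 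Normalizing $\mathbf{S}_{N,M}$ by $\sigma_{N}$ therefore costs a factor of order $N^{-k/2}$ in the $k$-th moment.

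Next, I would expand $\bE\bigl[(\sum_{i=1}^{N}\mathbf{Y}_{M,i})^{k}\bigr]$ as a sum over index tuples $(i_{1},\ldots,i_{k}) \in \{1,\ldots,N\}^{k}$ and group terms by the partition $\pi$ of $\{1,\ldots,k\}$ recording which indices coincide. By interchangeability, a partition with $r$ blocks of sizes $(b_{1},\ldots,b_{r})$ contributes $(N)_{r}\,\bE[\mathbf{Y}_{M,1}^{b_{1}}\cdots\mathbf{Y}_{M,r}^{b_{r}}]$, where $(N)_{r} = N(N-1)\cdots(N-r+1)$. For $k = 2m$, the pair-partitions (all $b_{i}=2$) dominate: there are $(2m-1)!!$ of them, each contributes a combinatorial factor $(N)_{m} \sim N^{m}$, and the second covariance hypothesis $Cov(\mathbf{Y}_{M,1}^{2},\mathbf{Y}_{M,2}^{2}) = O(1/M)$ together with $\bE[\mathbf{Y}_{M,i}^{2}] = 1$ gives $\bE[\mathbf{Y}_{M,1}^{2}\mathbf{Y}_{M,2}^{2}] = 1 + O(1/M)$, so each pair-partition contributes $(1+o(1))$ times its Gaussian counterpart. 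Dividing by $\sigma_{N}^{2m} = \Theta(N^{m})$, this yields $\bE[\mathbf{S}_{N,M}^{2m}] \to (2m-1)!!$, the $2m$-th Gaussian moment. For odd $k$ and non-pair partitions of even $k$, either $r < k/2$ (so the contribution is of order $N^{r}/\sigma_{N}^{k} = o(1)$), or the partition contains a singleton block whose first moment $\bE[\mathbf{Y}_{M,1}] = 0$ kills it to leading order.

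The main technical obstacle is bounding the mixed moments $\bE[\mathbf{Y}_{M,1}^{b_{1}}\cdots\mathbf{Y}_{M,r}^{b_{r}}]$ for $r \geq 3$, since the hypotheses supply only pairwise covariance estimates. This gap can be closed by a telescoping covariance identity that peels off one factor at a time and iteratively invokes the pairwise bound, supplemented by whatever additional moment control is available from the underlying construction of the $\mathbf{Y}_{M,i}$. A cleaner and more quantitative alternative is Stein's method with exchangeable pairs: take $(W,W') = (\mathbf{S}_{N,M}, \mathbf{S}_{N,M}')$, where $\mathbf{S}_{N,M}'$ replaces one uniformly chosen $\mathbf{Y}_{M,I}$ by an exchangeable copy. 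The two covariance hypotheses then directly control the Stein coefficients $\bE[W'-W \mid W]$ and $\bE[(W'-W)^{2} \mid W]$, yielding a Wasserstein bound $d_{W}(\mathbf{S}_{N,M},\mathbf{Z}) = O(1/\sqrt{M})$ that vanishes as $M \to \infty$, which is strictly stronger than the qualitative conclusion stated in the lemma.
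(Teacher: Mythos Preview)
The paper does not prove this lemma; it is quoted from \cite{sricharan2012estimation,kumar2012thesis} and attributed to an extension of Blum et al.\ \cite{blum1958central}. The Blum--Chernoff--Rosenblatt--Teicher argument proceeds not via moments but via de Finetti's theorem: an (infinitely) exchangeable sequence is conditionally i.i.d.\ given a tail $\sigma$-field $\mathcal{F}$, and the two covariance hypotheses say precisely that the conditional mean $\mu_M=\bE[\mathbf{Y}_{M,1}\mid\mathcal{F}]$ satisfies $\bE[\mu_M^2]=Cov(\mathbf{Y}_{M,1},\mathbf{Y}_{M,2})=O(1/M)$ and that the conditional second moment $\bE[\mathbf{Y}_{M,1}^2\mid\mathcal{F}]$ has variance $Cov(\mathbf{Y}_{M,1}^2,\mathbf{Y}_{M,2}^2)=O(1/M)$. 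Thus the conditional mean and variance are asymptotically deterministic, and a conditional CLT (Lindeberg, requiring only second moments) finishes the job.

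Your proposal has a genuine gap that cannot be closed within the stated hypotheses. The method of moments requires $\bE[\mathbf{Y}_{M,1}^{b_1}\cdots\mathbf{Y}_{M,r}^{b_r}]$ to exist and be suitably bounded for arbitrary block sizes $b_i$, but the lemma's hypotheses do not guarantee that any moment of $\mathbf{Y}_{M,1}$ above order two is even finite (finiteness of $Cov(\mathbf{Y}_{M,1}^2,\mathbf{Y}_{M,2}^2)$ gives $\bE[\mathbf{Y}_{M,1}^2\mathbf{Y}_{M,2}^2]<\infty$, not $\bE[\mathbf{Y}_{M,1}^4]<\infty$). Your ``telescoping covariance identity'' cannot manufacture these bounds from the two pairwise hypotheses, and invoking ``additional moment control \ldots\ from the underlying construction'' is proving a different, weaker lemma. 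The Stein exchangeable-pair route has the same defect: the standard bound carries a remainder of order $\bE|W'-W|^3\asymp N^{-1/2}\,\bE|\mathbf{Y}_{M,1}|^3$, which again needs a third absolute moment that you do not have. If you are willing to add a uniform assumption such as $\sup_M\bE|\mathbf{Y}_{M,1}|^p<\infty$ for all $p$---which does hold in the paper's application, since $g$ is bounded under $(\mathcal{A}.4)$---then either of your approaches can be completed; but as a proof of Lemma~\ref{lem:clt_covariance} as stated, the de Finetti route is the right one.
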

This lemma is an extension of work by Blum et al~\cite{blum1958central}
which showed that if $\left\{ \mathbf{Z}_{i};i=1,2,\dots\right\} $
is an interchangeable process with zero mean and unit variance, then
$\mathbf{S}_{N}=\frac{1}{\sqrt{N}}\sum_{i=1}^{N}\mathbf{Z}_{i}$ converges
in distribution to a standard normal random variable if and only if
$Cov\left[\mathbf{Z}_{1},\mathbf{Z}_{2}\right]=0$ and $Cov\left[\mathbf{Z}_{1}^{2},\mathbf{Z}_{2}^{2}\right]=0$.
In other words, the central limit theorem holds if and only if the
interchangeable process is uncorrelated and the squares are uncorrelated.
Lemma~\ref{lem:clt_covariance} shows that for a correlated interchangeable
process, a sufficient condition for a central limit theorem is for
the interchangeable process and the squared process to be asymptotically
uncorrelated with rate $O(1/M)$.

For simplicity, let $M_{1}=M_{2}=M$ and $\lhatl l:=\hat{\mathbf{L}}_{k(l),k(l)}$.
Define \[
\mathbf{Y}_{M,i}=\frac{\sum_{l\in\bar{l}}w(l)g\left(\lhatl l(\mathbf{X}_{i})\right)-\bE\left[\sum_{l\in\bar{l}}w(l)g\left(\lhatl l(\mathbf{X}_{i})\right)\right]}{\sqrt{\var\left[\sum_{l\in\bar{l}}w(l)g\left(\lhatl l(\mathbf{X}_{i})\right)\right]}}.\]
 Then from Eq.~\ref{eq:sum}, we have that \[
\mathbf{S}_{N,M}=\left(\hat{\mathbf{G}}_{w}-\bE\left[\hat{\mathbf{G}}_{w}\right]\right)/\sqrt{\var\left[\hat{\mathbf{G}}_{w}\right]}.\]
 Thus it is sufficient to show from Lemma~\ref{lem:clt_covariance}
that $Cov(\mathbf{Y}_{M,1},\mathbf{Y}_{M,2})$ and $Cov(\mathbf{Y}_{M,1}^{2},\mathbf{Y}_{M,2}^{2})$
are $O(1/M)$. To do this, it is necessary to show that the denominator
of $\mathbf{Y}_{M,i}$ converges to a nonzero constant or to zero
sufficiently slowly. It is also necessary to show that the covariance
of the numerator is $O(1/M)$. Therefore, to bound $Cov(\mathbf{Y}_{M,1},\mathbf{Y}_{M,2})$,
we require bounds on the quantity $Cov\left[g\left(\lhatl l(\mathbf{X}_{i})\right),g\left(\lhatl{l'}(\mathbf{X}_{j})\right)\right]$
where $l,\, l'\in\bar{l}$. 

Define $\mathcal{M}(\mathbf{Z}):=\mathbf{Z}-\mathbb{E}\mathbf{Z}$,
$\ekl l(\mathbf{Z}):=\lhatl l(\mathbf{Z})-\mathbb{E}_{\mathbf{Z}}\left(\lhatl l(\mathbf{Z})\right)$,
and $\ehatl il(\mathbf{Z}):=\fhatl il(\mathbf{Z})-\mathbb{E}_{\mathbf{Z}}\fhatl il(\mathbf{Z})$.
Assuming $g$ is sufficiently smooth, a Taylor series expansion of
$g\left(\lhatl l(\mathbf{Z})\right)$ around $\mathbb{E}_{\mathbf{Z}}\lhatl l(\mathbf{Z})$
gives \[
g\left(\lhatl l(\mathbf{Z})\right)=\sum_{i=0}^{\lambda-1}\frac{g^{(i)}\left(\mathbb{E}_{\mathbf{Z}}\lhatl l(\mathbf{Z})\right)}{i!}\ekl l^{i}(\mathbf{Z})+\frac{g^{(\lambda)}\left(\mathbf{\xi_{Z}}\right)}{\lambda!}\ekl l^{\lambda}(\mathbf{Z}),\]
where $\mathbf{\xi_{Z}}\in\left(\mathbb{E}_{\mathbf{Z}}\ekl l(\mathbf{Z}),\ekl l(\mathbf{Z})\right)$.
We use this expansion to bound the covariance. The expected value
of the terms containing the derivatives of $g$ is controlled by assuming
that the densities are lower bounded. By assuming the densities are
sufficiently smooth, an expression for $\ekl l^{q}\left(\mathbf{Z}\right)$
in terms of powers and products of the density error terms $\ehatl 1l$
and $\ehatl 2l$ is obtained by expanding $\lhatl l(\mathbf{Z})$
around $\mathbb{E}_{\mathbf{Z}}\fhatl 1l(\mathbf{Z})$ and $\mathbb{E}_{\mathbf{Z}}\fhatl 2l(\mathbf{Z})$
and applying the binomial theorem. The expected value of products
of these density error terms is bounded by applying concentration
inequalities and conditional independence. Then the covariance between
$\ekl l^{q}(\mathbf{Z})$ terms is bounded by bounding the covariance
between powers and products of the density error terms by applying
Cauchy-Schwarz and other concentration inequalities. This gives the
following lemma which is proved in the appendices.
\begin{lem}
\label{lem:covariance_ek}Let $l,l'\in\bar{l}$ be fixed, $M_{1}=M_{2}=M$,
and $k(l)=l\sqrt{M}$. Let $\gamma_{1}(x),$ $\gamma_{2}(x)$ be arbitrary
functions with $1$ partial derivative wrt $x$ and $\sup_{x}|\gamma_{i}(x)|<\infty,\, i=1,\,2$
and let $1_{\{\cdot\}}$ be the indicator function. Let $\mathbf{X}_{i}$
and $\mathbf{X}_{j}$ be realizations of the density $f_{2}$ independent
of $\fhatl 1l$, $\fhatl 1{l'}$, $\fhatl 2l$, and $\fhatl 2{l'}$
and independent of each other when $i\neq j$. Then \[
Cov\left[\gamma_{1}(\mathbf{X}_{i})\ekl l^{q}(\mathbf{X}_{i}),\gamma_{2}(\mathbf{X}_{j})\ekl{l'}^{r}(\mathbf{X}_{j})\right]=\begin{cases}
o(1), & i=j\\
1_{\{q,r=1\}}c_{8}\left(\gamma_{1}(x),\gamma_{2}(x)\right)\left(\frac{1}{M}\right)+o\left(\frac{1}{M}\right), & i\neq j.\end{cases}\]

\end{lem}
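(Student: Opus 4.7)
The plan is to expand $\ekl l^q$ as a polynomial in the density errors $\ehatl 1l$ and $\ehatl 2l$ and then reduce the covariance to moments of products of these errors, which can be controlled by concentration inequalities together with the independence of the two sample sets $\{\mathbf{Y}_s\}$ (from $f_1$) and $\{\mathbf{X}_{N+t}\}$ (from $f_2$). I would Taylor-expand $(u_1,u_2)\mapsto u_1/u_2$ around $\bigl(\ez\fhatl 1l(\mathbf{Z}),\,\ez\fhatl 2l(\mathbf{Z})\bigr)$ to order $\lambda$, using the lower bound on $f_2$ to keep derivatives bounded, to obtain
\[
\lhatl l(\mathbf{Z})=\sum_{0\leq p+q\leq\lambda}a_{p,q}(\mathbf{Z})\,\ehatl 1l^{p}(\mathbf{Z})\,\ehatl 2l^{q}(\mathbf{Z})+R_{\lambda}(\mathbf{Z}),
\]
with bounded coefficients $a_{p,q}(\mathbf{Z})$ and a remainder absorbed into $o(1/M)$ by choosing $\lambda$ large (and truncating away the rare event that $\ehatl 2l(\mathbf{Z})$ is comparable in magnitude to $\ez\fhatl 2l(\mathbf{Z})$). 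Subtracting the conditional mean and applying the binomial theorem then writes $\ekl l^{q}(\mathbf{Z})$ as a finite linear combination of centered monomials $\mathcal{M}\bigl(\ehatl 1l^{\alpha}(\mathbf{Z})\ehatl 2l^{\beta}(\mathbf{Z})\bigr)$ with bounded coefficient functions of $\mathbf{Z}$.

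For the case $i=j$, I would apply Cauchy--Schwarz to bound
\[
\bigl|Cov[\gamma_{1}(\mathbf{X}_{i})\ekl l^{q}(\mathbf{X}_{i}),\gamma_{2}(\mathbf{X}_{i})\ekl{l'}^{r}(\mathbf{X}_{i})]\bigr|\leq\|\gamma_{1}\|_{\infty}\|\gamma_{2}\|_{\infty}\bigl(\bE\ekl l^{2q}(\mathbf{X}_{i})\bigr)^{1/2}\bigl(\bE\ekl{l'}^{2r}(\mathbf{X}_{i})\bigr)^{1/2}.
\]
Standard moment bounds on $k$-nn density estimates, propagated through the expansion, give $\bE\ekl l^{2q}(\mathbf{X}_{i})=O(k(l)^{-q})=O(M^{-q/2})$, so the right-hand side is $o(1)$ whenever $q,r\geq 1$, matching the claim.

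For $i\neq j$ I would exploit that $\mathbf{X}_{i},\mathbf{X}_{j}$ are iid and jointly independent of both sample sets. Substituting the expansion, the covariance decomposes into a sum of conditional covariances of the form $Cov(\mathcal{M}(\ehatl 1l^{\alpha}\ehatl 2l^{\beta})(\mathbf{X}_{i}),\mathcal{M}(\ehatl 1{l'}^{\alpha'}\ehatl 2{l'}^{\beta'})(\mathbf{X}_{j})\mid\mathbf{X}_{i},\mathbf{X}_{j})$ weighted by bounded functions of $\mathbf{X}_i,\mathbf{X}_j$. Two structural observations reduce this sum. First, because $\{\mathbf{Y}_s\}$ and $\{\mathbf{X}_{N+t}\}$ are independent, any cross pair whose $\mathbf{X}_{i}$-factor involves only $\ehatl 1l$ and whose $\mathbf{X}_{j}$-factor involves only $\ehatl 2{l'}$ contributes zero. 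Second, for matching factor types, the conditional covariance at two independent evaluation points is governed by the expected overlap of the two $k(l)$-nearest-neighbor balls, of order $k(l)/M=O(1/\sqrt{M})$; combined with the per-point variance $O(1/\sqrt{M})$ this gives $O(1/M)$ only for the linear ($\alpha+\beta=\alpha'+\beta'=1$) terms, which arise precisely from $q=r=1$ and combine to the claimed constant $c_{8}(\gamma_{1}(x),\gamma_{2}(x))$. All higher-order monomials and the Taylor remainder fall into $o(1/M)$.

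The main obstacle will be this last step: tracking the $k$-nn neighborhood overlaps precisely enough to isolate the exact $c_{8}/M$ coefficient for $q=r=1$ while simultaneously verifying that every higher-order cross monomial and the Taylor remainder $R_\lambda$ is strictly $o(1/M)$. The remaining ingredients -- Cauchy--Schwarz, the lower bound on $f_2$, boundedness of $\gamma_1,\gamma_2$, and standard $k$-nn concentration and moment inequalities -- are routine but generate substantial bookkeeping across the many monomials produced by the binomial expansion.
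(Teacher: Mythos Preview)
Your approach is essentially the paper's: Taylor-expand $\lhatl l$ in the density errors $\ehatl 1l,\ehatl 2l$, use the binomial theorem to write $\ekl l^{q}$ as a finite combination of monomials in those errors with bounded coefficients, and then control the resulting covariances via Cauchy--Schwarz and $k$-nn concentration. The $i=j$ case is handled exactly as you describe, and the paper likewise reduces the $i\neq j$ case to covariances of the form $Cov[\gamma_1(\mathbf{X})\ehatl 1l^{s}(\mathbf{X})\ehatl 2l^{q}(\mathbf{X}),\,\gamma_2(\mathbf{Y})\ehatl 1{l'}^{t}(\mathbf{Y})\ehatl 2{l'}^{r}(\mathbf{Y})]$.

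One point in your heuristic for $i\neq j$ is inverted and will cause trouble at the step you flagged as the main obstacle. The explicit $c_8/M$ term does \emph{not} arise from the overlap of the two $k$-nn balls. The paper conditions on $\mathbf{X}_i,\mathbf{X}_j$ and splits on the event $\Psi(l,l')=\{\|\mathbf{X}_i-\mathbf{X}_j\|\geq 2(\max(k(l),k(l'))/M)^{1/d}\}$ that the balls are \emph{disjoint}. On $\Psi$, the two sample counts are jointly multinomial with negative covariance $-MU_{i,l}(\mathbf{X}_i)U_{i,l'}(\mathbf{X}_j)$, which after normalization gives exactly $-f_i(\mathbf{X}_i)f_i(\mathbf{X}_j)/M+o(1/M)$ for the linear terms; integrating against $\gamma_1,\gamma_2$ produces $c_8$. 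On the complement $\Psi^{C}$ one uses only the crude Cauchy--Schwarz bound $O(k(l)^{-q/2}k(l')^{-r/2})$, and the contribution is suppressed because $\int_{\Psi^{C}}dy=O(k/M)$ is small. So the leading constant comes from multinomial negative correlation on the non-overlap region, while the overlap region is the part that is swept into the remainder; if you pursue an ``overlap $\times$ variance'' accounting you will get the right order but not isolate $c_8$.
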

Note that $k(l)$ is required to grow with $\sqrt{M}$ for Lemma~\ref{lem:covariance_ek}
to hold. Define $h_{l,g}(\mathbf{X})=g\left(\bE_{\mathbf{X}}\lhatl l(\mathbf{X})\right)$.
Lemma~\ref{lem:covariance_ek} can then be used to show that \[
Cov\left[g\left(\lhatl l(\mathbf{X}_{i})\right),g\left(\lhatl{l'}(\mathbf{X}_{j})\right)\right]=\begin{cases}
\bE\left[\mathcal{M}\left(h_{l,g}(\mathbf{X}_{i})\right)\mathcal{M}\left(h_{l',g}(\mathbf{X}_{i})\right)\right]+o(1), & i=j\\
c_{8}\left(h_{l,g'}(x),h_{l',g'}(x)\right)\left(\frac{1}{M}\right)+o\left(\frac{1}{M}\right), & i\neq j.\end{cases}\]

For the covariance of $\mathbf{Y}_{M,i}^{2}$ and $\mathbf{Y}_{M,j}^{2}$,
assume WLOG that $i=1$ and $j=2$. Then for $l,\, l',\, j,\, j'$
we need to bound the term \begin{equation}
Cov\left[\mathcal{M}\left(g\left(\lhatl l(\mathbf{X}_{1})\right)\right)\mathcal{M}\left(g\left(\lhatl{l'}(\mathbf{X}_{1})\right)\right),\mathcal{M}\left(g\left(\lhatl j(\mathbf{X}_{2})\right)\right)\mathcal{M}\left(g\left(\lhatl{j'}(\mathbf{X}_{2})\right)\right)\right].\label{eq:cov_squared}\end{equation}
For the case where $l=l'$ and $j=j'$, we can simply apply the previous
results to the functional $d(x)=\left(\mathcal{M}\left(g(x)\right)\right)^{2}$.
For the more general case, we need to show that \begin{equation}
Cov\left[\gamma_{1}(\mathbf{X}_{1})\ekl l^{s}(\mathbf{X}_{1})\ekl{l'}^{q}(\mathbf{X}_{1}),\gamma_{2}(\mathbf{X}_{2})\ekl j^{t}(\mathbf{X}_{2})\ekl{j'}^{r}(\mathbf{X}_{2})\right]=O\left(\frac{1}{M}\right).\label{eq:ek_squares}\end{equation}
To do this, bounds are required on the covariance of up to eight distinct
density error terms. Previous results can be applied by using Cauchy-Schwarz
when the sum of the exponents of the density error terms is greater
than or equal to 4. When the sum is equal to 3, we use the fact that
$k(l)=O(k(l'))$ combined with Markov's inequality to obtain a bound
of $O\left(1/M\right)$. Applying Eq.~\ref{eq:ek_squares} to the
term in Eq.~\ref{eq:cov_squared} gives the required bound to apply
Lemma~\ref{lem:clt_covariance}.

\subsection{Broad Implications of Theorem~\ref{thm:clt}}

To the best of our knowledge, Theorem~\ref{thm:clt} provides the
first results on the asymptotic distribution of an $f$-divergence
estimator with MSE convergence rate of $O\left(1/T\right)$ under
the setting of a finite number of samples from two unknown, non-parametric
distributions. This enables us to perform inference tasks on the class
of $f$-divergences (defined with smooth functions $g$) on smooth,
strictly lower bounded densities with finite support. Such tasks include
hypothesis testing and constructing a confidence interval on the error
exponents of the Bayes probability of error for a classification problem.
This greatly increases the utility of these divergence estimators.

Although we focused on a specific divergence estimator, we suspect
that our approach of showing that the components of the estimator
and their squares are asymptotically uncorrelated can be adapted to
derive central limit theorems for other divergence estimators that
satisfy similar assumptions (smooth $g$, and smooth, strictly lower
bounded densities with finite support). We speculate that this would
be easiest for estimators that are also based on $k$-nearest neighbors
such as in~\cite{poczos2011divergence} and~\cite{wang2009divergence}.
It is also possible that the approach can be adapted to other plug-in
estimator approaches such as in~\cite{singh2014exp} and~\cite{krishnamurthy2014div}.
However, the qualitatively different convex optimization approach
of divergence estimation in~\cite{nguyen2010divergence} may require
different methods.

\section{Experiments\label{sec:Experiments}}

We first apply the weighted ensemble estimator of divergence to simulated
data to verify the central limit theorem. We then use the estimator
to obtain confidence intervals on the error exponents of the Bayes
probability of error for the Iris data set from the UCI machine learning
repository~\cite{uci2013,fisher1936use}.

\subsection{Simulation}

To verify the central limit theorem of the ensemble method, we estimated
the KL divergence between two truncated normal densities restricted
to the unit cube. The densities have means $\bar{\mu}_{1}=0.7*\bar{1}_{d}$,
$\bar{\mu}_{2}=0.3*\bar{1}_{d}$ and covariance matrices $\sigma_{i}I_{d}$
where $\sigma_{1}=0.1,$ $\sigma_{2}=0.3,$ $\bar{1}_{d}$ is a $d$-dimensional
vector of ones, and $I_{d}$ is a $d$-dimensional identity matrix.
We show the Q-Q plot of the normalized optimally weighted ensemble
estimator of the KL divergence with $d=6$ and $1000$ samples from
each density in Fig.~\ref{fig:qqplot}. The linear relationship between
the quantiles of the normalized estimator and the standard normal
distribution validates Theorem~\ref{thm:clt}. 

\begin{SCfigure}

\includegraphics[width=0.53\textwidth]{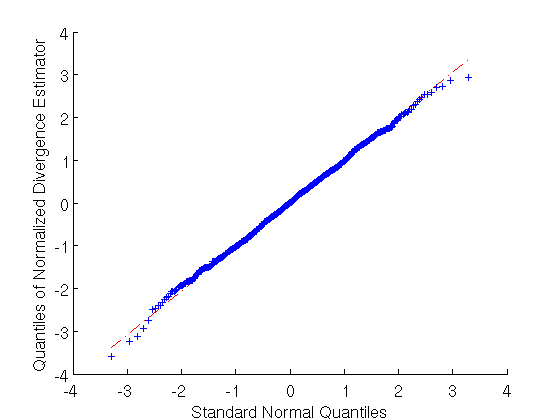}

\caption{Q-Q plot comparing quantiles from the normalized weighted ensemble estimator of the KL divergence (vertical axis) to the quantiles from the standard normal distribution (horizontal axis). The red line shows . The linearity of the Q-Q plot points validates the central limit theorem, Theorem.~\ref{thm:clt}, for the estimator. \label{fig:qqplot}}

\end{SCfigure}

\subsection{Probability of Error Estimation}

Our ensemble divergence estimator can be used to estimate a bound
on the Bayes probability of error~\cite{cover2006infotheory}. Suppose
we have two classes $C_{1}$ or $C_{2}$ and a random observation
$x$. Let the \emph{a priori }class probabilities be $w_{1}=Pr(C_{1})>0$
and $w_{2}=Pr(C_{2})=1-w_{1}>0$. Then $f_{1}$ and $f_{2}$ are the
densities corresponding to the classes $C_{1}$ and $C_{2}$, respectively.
The Bayes decision rule classifies $x$ as $C_{1}$ if and only if
$w_{1}f_{1}(x)>w_{2}f_{2}(x)$. The Bayes error $P_{e}^{*}$ is the
minimum average probability of error and is equivalent to \begin{eqnarray}
P_{e}^{*} & = & \int\min\left(Pr(C_{1}|x),Pr(C_{2}|x)\right)p(x)dx\nonumber \\
 & = & \int\min\left(w_{1}f_{1}(x),w_{2}f_{2}(x)\right)dx,\label{eq:PError}\end{eqnarray}
where $p(x)=w_{1}f_{1}(x)+w_{2}f_{2}(x)$. For $a,\, b>0$, we have
\[
\min(a,b)\leq a^{\alpha}b^{1-\alpha},\,\forall\alpha\in(0,1).\]
Replacing the minimum function in Eq.~\ref{eq:PError} with this
bound gives \begin{equation}
P_{e}^{*}\leq w_{1}^{\alpha}w_{2}^{1-\alpha}c_{\alpha}(f_{1}||f_{2}),\label{eq:chernoff1}\end{equation}
where $c_{\alpha}(f_{1}||f_{2})=\int f_{1}^{\alpha}(x)f_{2}^{1-\alpha}(x)dx$
is the Chernoff $\alpha$-coefficient. The Chernoff coefficient is
found by choosing the value of $\alpha$ that minimizes the right
hand side of Eq.~\ref{eq:chernoff1}: \[
c^{*}(f_{1}||f_{2})=c_{\alpha^{*}}(f_{1}||f_{2})=\min_{\alpha\in(0,1)}\int f_{1}^{\alpha}(x)f_{2}^{1-\alpha}(x)dx.\]
 Thus if $\alpha^{*}=\arg\min_{\alpha\in(0,1)}c_{\alpha}(f_{1}||f_{2})$,
an upper bound on the Bayes error is \begin{equation}
P_{e}^{*}\leq w_{1}^{\alpha^{*}}w_{2}^{1-\alpha^{*}}c^{*}(f_{1}||f_{2}).\label{eq:chernoff3}\end{equation}
Equation~\ref{eq:chernoff3} includes the form in Eq.~\ref{eq:fdivergences}
($g(x)=x^{\alpha}$). Thus we can use the optimally weighted ensemble
estimator described in Sec.~\ref{sec:Weighted_ensemble} to estimate
a bound on the Bayes error. In practice, we estimate $c_{\alpha}(f_{1}||f_{2})$
for multiple values of $\alpha$ (e.g. $0.01,0.02,\dots,0.99$) and
choose the minimum.

We estimated a bound on the pairwise Bayes error between the three
classes (Setosa, Versicolor, and Virginica) in the Iris data set~\cite{uci2013,fisher1936use}
and used bootstrapping to calculate confidence intervals. We compared
the bounds to the performance of a quadratic discriminant analysis
classifier (QDA) with 5-fold cross validation. The pairwise estimated
95\% confidence intervals and the misclassification rates of the QDA
are given in Table~\ref{tab:iris}. Note that the right endpoint
of the confidence interval is less than $1/50$ when comparing the
Setosa class to either of the other two classes. This is consistent
with the performance of the QDA and the fact that the Setosa class
is linearly separable from the other two classes. In contrast, the
right endpoint of the confidence interval is higher when comparing
the Versicolor and Virginica classes which are not linearly separable.
This is also consistent with the QDA performance. Thus the estimated
bounds provide a measure of the relative difficulty of distinguishing
between the classes, even though the small number of samples for each
class ($50$) limits the accuracy of the estimated bounds.

\begin{table}
\centering

\begin{tabular}{|c|c|c|c|}
\hline 
 & Setosa-Versicolor & Setosa-Virginica & Versicolor-Virginica\tabularnewline
\hline
\hline 
Estimated Confidence Interval & $(0,0.0013)$ & $(0,0.0002)$ & $(0,0.0726)$\tabularnewline
\hline 
QDA Misclassification Rate & $0$ & $0$ & $0.04$\tabularnewline
\hline
\end{tabular}

\caption{Estimated 95\% confidence intervals for the bound on the pairwise
Bayes error and the misclassification rate of a QDA classifier with
5-fold cross validation applied to the Iris dataset. The right endpoint
of the confidence intervals is nearly zero when comparing the Setosa
class to the other two classes while the right endpoint is much higher
when comparing the Versicolor and Virginica classes. This is consistent
with the QDA performance and the fact that the Setosa class is linearly
separable from the other two classes.\label{tab:iris}}

\end{table}

\section{Conclusion\label{sec:Conclusion}}

In this paper, we established the asymptotic normality for a weighted
ensemble estimator of $f$-divergence using $d$-dimensional truncated
$k$-nn density estimators. To the best of our knowledge, this gives
the first results on the asymptotic distribution of an $f$-divergence
estimator with MSE convergence rate of $O\left(1/T\right)$ under
the setting of a finite number of samples from two unknown, non-parametric
distributions. Future work includes simplifying the constants in front
of the convergence rates given in~\cite{moon2014isit} for certain
families of distributions, deriving Berry-Esseen bounds on the rate
of distributional convergence, extending the central limit theorem
to other divergence estimators, and deriving the nonasymptotic distribution
of the estimator.

\subsubsection*{Acknowledgments}

This work was partially supported by NSF grant CCF-1217880 and a NSF
Graduate Research Fellowship to the first author under Grant No. F031543.

\appendix

\section{Assumptions}

We use the same assumptions on the densities and the functional as
in~\cite{moon2014isit} and~\cite{arxiv2014div}. They are
\begin{itemize}
\item $(\mathcal{A}.0)$: Assume that $k_{i}=k_{0}M_{i}^{\beta}$ with $0<\beta<1$,
that $M_{2}=\alpha_{frac}T$ with $0<\alpha_{frac}<1$. 
\item $(\mathcal{A}.1)$: Assume there exist constants $\epsilon_{0},\epsilon_{\infty}$
such that $0<\epsilon_{0}\leq f_{i}(x)\leq\epsilon_{\infty}<\infty,\,\forall x\in S.$ 
\item $(\mathcal{A}.2)$: Assume that the densities $f_{i}$ have continuous
partial derivatives of order $d$ in the interior of $\mathcal{S}$
that are upper bounded. 
\item $(\mathcal{A}.3)$: Assume that $g$ has derivatives $g^{(j)}$ of
order $j=1,\dots,\max\{\lambda,d\}$ where $\lambda\beta>1$. 
\item $(\mathcal{A}.4$): Assume that $\left|g^{(j)}\left(f_{1}(x)/f_{2}(x)\right)\right|$,
$j=0,\ldots,\max\{\lambda,d\}$ are strictly upper bounded for $\epsilon_{0}\leq f_{i}(x)\leq\epsilon_{\infty}.$ 
\item $(\mathcal{A}.5)$: Let $\epsilon\in(0,1)$, $\delta\in(2/3,1)$,
and $\mathcal{C}(k)=\exp\left(-3k^{(1-\delta)}\right).$ For fixed
$\epsilon,$ define $p_{l,i}=(1-\epsilon)\epsilon_{0}\frac{k_{i}-1}{M_{i}}$,
$p_{u,i}=(1+\epsilon)\epsilon_{\infty}\frac{k_{i}-1}{M_{i}}$, $q_{l,i}=\frac{k_{i}-1}{M_{i}\bar{c}D^{d}},$
and $q_{u,i}=(1+\epsilon)\epsilon_{\infty}$ where $D$ is the diameter
of the support $S.$ Let $\mathbf{P}_{i}$ be a beta distributed random
variable with parameters $k_{i}$ and $M_{i}-k_{i}+1.$ Define $p_{l}=\frac{p_{l,1}}{p_{u,2}}$
and $p_{u}=\frac{p_{u,1}}{p_{l,2}}$. Assume that for $U(L)=g(L),\, g^{(3)}(L),$
and $g^{(\lambda)}(L),$

\begin{itemize}
\item $(i)\,\mathbb{E}\left[\sup_{L\in(p_{l},p_{u})}\left|U\left(L\frac{\mathbf{P}_{2}}{\mathbf{P}_{1}}\right)\right|\right]=G_{1}<\infty$, 
\item $(ii)\,\sup_{L\in\left(\frac{q_{l,1}}{q_{u,2}},\frac{q_{u,1}}{q_{l,2}}\right)}\left|U\left(L\right)\right|\mathcal{C}\left(k_{1}\right)\mathcal{C}\left(k_{2}\right)=G_{2}<\infty,$ 
\item $(iii)\,\mathbb{E}\left[\sup_{L\in\left(\frac{q_{l,1}}{p_{u,2}},\frac{q_{u,1}}{p_{l,2}}\right)}\left|U\left(L\mathbf{P}_{2}\right)\right|\mathcal{C}\left(k_{1}\right)\right]=G_{3}<\infty,$ 
\item $(iv)\,\mathbb{E}\left[\sup_{L\in\left(\frac{p_{l,1}}{q_{u,2}},\frac{p_{u,1}}{q_{l,2}}\right)}\left|U\left(\frac{L}{\mathbf{P}_{1}}\right)\right|\mathcal{C}\left(k_{2}\right)\right]=G_{4}<\infty,\,\forall M_{i}.$ 
\end{itemize}
\end{itemize}
Densities for which assumptions $(\mathcal{A}.0)-(\mathcal{A}.5)$
hold include the truncated Gaussian distribution and the Beta distribution
on the unit cube. Functions for which the assumptions hold include
$g(L)=-\ln L$ and $g(L)=L^{\alpha}.$

\section{Proof of Theorem 2}

We use Lemma~\ref{lem:clt_covariance} which is proved in~\cite{sricharan2012estimation}
and restate it here:
\begin{lem}
\label{lem:clt_covariance-1}Let the random variables $\{\mathbf{Y}_{M,i}\}_{i=1}^{N}$
belong to a zero mean, unit variance, interchangeable process for
all values of $M$. Assume that $Cov(\mathbf{Y}_{M,1},\mathbf{Y}_{M,2})$
and $Cov(\mathbf{Y}_{M,1}^{2},\mathbf{Y}_{M,2}^{2})$ are $O(1/M)$.
Then the random variable \begin{equation}
\mathbf{S}_{N,M}=\frac{\sum_{i=1}^{N}\mathbf{Y}_{M,i}}{\sqrt{\var\left[\sum_{i=1}^{N}\mathbf{Y}_{M,i}\right]}}\label{eq:sum-1}\end{equation}
converges in distribution to a standard normal random variable.
\end{lem}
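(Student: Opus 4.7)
The plan is to exploit interchangeability to obtain a de Finetti--type conditionally i.i.d.\ representation, apply the classical Lindeberg CLT inside each conditioning fiber, and then remove the conditioning by Slutsky and dominated convergence.

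First, for each fixed $M$, I would embed $\{\mathbf{Y}_{M,i}\}_{i=1}^N$ in an infinite exchangeable sequence (in the target application of Theorem~\ref{thm:clt} the $\mathbf{X}_i$ from which $\mathbf{Y}_{M,i}$ is built are i.i.d., so permutation invariance extends automatically) and let $\mathcal{F}_M$ denote the associated exchangeable $\sigma$-algebra. De Finetti's theorem then makes $\{\mathbf{Y}_{M,i}\}_i$ i.i.d.\ conditional on $\mathcal{F}_M$, with conditional mean $\mu_M:=\mathbb{E}[\mathbf{Y}_{M,1}\mid\mathcal{F}_M]$ and conditional variance $\sigma_M^2:=\mathrm{Var}(\mathbf{Y}_{M,1}\mid\mathcal{F}_M)$. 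Conditional independence gives
\[
\mathrm{Cov}(\mathbf{Y}_{M,1},\mathbf{Y}_{M,2})=\mathbb{E}[\mu_M^2], \qquad \mathrm{Cov}(\mathbf{Y}_{M,1}^{2},\mathbf{Y}_{M,2}^{2})=\mathrm{Var}(\mu_M^2+\sigma_M^2),
\]
so the two $O(1/M)$ hypotheses translate into $\mu_M\to 0$ in $L^2$ and, in combination with the normalization $\mathbb{E}[\mu_M^2+\sigma_M^2]=\mathbb{E}[\mathbf{Y}_{M,1}^2]=1$, into $\sigma_M^2\to 1$ in probability.

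Second, I would apply the Lindeberg CLT inside each fiber of $\mathcal{F}_M$ to conclude that
\[
\frac{1}{\sigma_M\sqrt{N}}\sum_{i=1}^N(\mathbf{Y}_{M,i}-\mu_M)\;\Rightarrow\;\mathcal{N}(0,1)\quad\text{given }\mathcal{F}_M,
\]
the Lindeberg condition being supplied by the uniform $L^4$-bound on $\mathbf{Y}_{M,1}$ that is implicit in the finiteness of the second covariance hypothesis. Writing $V_{N,M}:=\mathrm{Var}\bigl(\sum_i\mathbf{Y}_{M,i}\bigr)=N\mathbb{E}[\sigma_M^2]+N(N-1)\mathbb{E}[\mu_M^2]$, Slutsky's lemma lets me replace the random scaling $\sigma_M\sqrt{N}$ by the deterministic $\sqrt{V_{N,M}}$ since $\sigma_M^2\to 1$ in probability; applying dominated convergence to the conditional characteristic function then lifts the conditional limit to the unconditional statement $\mathbf{S}_{N,M}\Rightarrow\mathcal{N}(0,1)$.

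The main technical obstacle is the drift term $N\mu_M/\sqrt{V_{N,M}}$, which arises because the conditional mean of each summand is $\mu_M$ while the unconditional mean is zero. Its $L^2$ size is $O(\sqrt{N\,\mathbb{E}[\mu_M^2]})=O(\sqrt{N/M})$, so it is outright negligible only when $N=o(M)$; in the balanced regime $N\asymp M$ relevant to Theorem~\ref{thm:clt} the drift must be shown to merge with the conditional Gaussian fluctuation into a single standard normal, and this is precisely where the quantitative $O(1/M)$ rate -- rather than a qualitative $o(1)$ bound -- actually does work. Handling this coupling carefully (for instance, via a secondary CLT for $\sqrt{N}\mu_M$ viewed as a statistic of $\mathcal{F}_M$, and then checking that the joint limit of the drift and fluctuation is jointly Gaussian with total variance one) is the crux of the argument.
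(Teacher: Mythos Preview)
The paper does not actually prove this lemma: it is imported verbatim from~\cite{sricharan2012estimation} and described only as an extension of Blum et al.'s CLT for interchangeable processes. So there is no in-paper argument to compare against beyond that one-line description; your de Finetti route is indeed the Blum et al.\ strategy the paper alludes to.

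That said, your sketch has two genuine gaps. First, the claim that a uniform $L^4$ bound on $\mathbf{Y}_{M,1}$ is ``implicit in the finiteness of the second covariance hypothesis'' is false: $\mathrm{Cov}(\mathbf{Y}_{M,1}^2,\mathbf{Y}_{M,2}^2)=\mathrm{Var}\bigl(\mathbb{E}[\mathbf{Y}_{M,1}^2\mid\mathcal{F}_M]\bigr)$ controls only the conditional second moment, not $\mathbb{E}[\mathbf{Y}_{M,1}^4]$, so your Lindeberg verification has no source. Second, and more seriously, your treatment of the drift $N\mu_M/\sqrt{V_{N,M}}$ is not a proof. You correctly observe that in the regime $N\asymp M$ relevant to Theorem~\ref{thm:clt} this term has order-one variance, but the hypotheses of the lemma say nothing about the \emph{distribution} of $\mu_M$; it is an arbitrary $\mathcal{F}_M$-measurable random variable, and there is no ``secondary CLT'' available for it from the stated assumptions. (Concretely: take $\mathbf{Y}_{M,i}=Z_i+\mu_M$ with $Z_i$ i.i.d.\ $\mathcal{N}(0,1-1/M)$ and $\mu_M=\pm M^{-1/2}$ Rademacher, independent of the $Z_i$. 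All hypotheses hold with $\mathrm{Cov}(\mathbf{Y}_{M,1}^2,\mathbf{Y}_{M,2}^2)=0$, yet with $N=M$ the limit of $\mathbf{S}_{N,M}$ is a two-component Gaussian mixture, not $\mathcal{N}(0,1)$.) In the concrete setting of Theorem~\ref{thm:clt} the tail $\sigma$-field $\mathcal{F}_M$ is generated by $M$ i.i.d.\ background samples and $\mu_M$ is a smooth statistic of them, so the extra structure you need is available \emph{there}; but as a proof of the abstract lemma your argument does not close, and you should consult~\cite{sricharan2012estimation} to see precisely what additional conditions or regime restrictions are being used.
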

For simplicity, let $M_{1}=M_{2}=M$ and $\lhatl l:=\hat{\mathbf{L}}_{k(l),k(l)}$.
Define \[
\mathbf{Y}_{M,i}=\frac{\sum_{l\in\bar{l}}w(l)g\left(\lhatl l(\mathbf{X}_{i})\right)-\bE\left[\sum_{l\in\bar{l}}w(l)g\left(\lhatl l(\mathbf{X}_{i})\right)\right]}{\sqrt{\var\left[\sum_{l\in\bar{l}}w(l)g\left(\lhatl l(\mathbf{X}_{i})\right)\right]}}.\]
 Then from Eq.~\ref{eq:sum-1}, we have that \[
\mathbf{S}_{N,M}=\frac{\hat{\mathbf{G}}_{w}-\bE\left[\hat{\mathbf{G}}_{w}\right]}{\sqrt{\var\left[\hat{\mathbf{G}}_{w}\right]}}.\]
 Thus it is sufficient to show from Lemma~\ref{lem:clt_covariance-1}
that $Cov(\mathbf{Y}_{M,1},\mathbf{Y}_{M,2})$ and $Cov(\mathbf{Y}_{M,1}^{2},\mathbf{Y}_{M,2}^{2})$
are $O(1/M)$. To do this, it is necessary to show that the denominator
of $\mathbf{Y}_{M,i}$ converges to a nonzero constant or to zero
sufficiently slowly. Note that the numerator and denominator of $\mathbf{Y}_{M,i}$
are, respectively,\[
\sum_{l\in\bar{l}}w(l)g\left(\lhatl l(\mathbf{X}_{i})\right)-\bE\left[\sum_{l\in\bar{l}}w(l)g\left(\lhatl l(\mathbf{X}_{i})\right)\right]\]
\begin{equation}
=\sum_{l\in\bar{l}}w(l)\left(g\left(\lhatl l(\mathbf{X}_{i})\right)-\bE\left[g\left(\lhatl l(\mathbf{X}_{i})\right)\right]\right),\label{eq:y_num}\end{equation}
\[
\sqrt{\var\left[\sum_{l\in\bar{l}}w(l)g\left(\lhatl l(\mathbf{X}_{i})\right)\right]}\]
\begin{equation}
=\sqrt{\sum_{l\in\bar{l}}\sum_{l'\in\bar{l}}w(l)w(l')Cov\left(g\left(\lhatl l(\mathbf{X}_{i})\right),g\left(\lhatl{l'}(\mathbf{X}_{i})\right)\right)}.\label{eq:y_den}\end{equation}
 Therefore, to bound $Cov(\mathbf{Y}_{M,1},\mathbf{Y}_{M,2})$, we
require bounds on the quantity $Cov\left[g\left(\lhatl l(\mathbf{X}_{i})\right),g\left(\lhatl{l'}(\mathbf{X}_{j})\right)\right]$. 

Some preliminary work is required before we can directly tackle this
quantity. Define $\mathcal{M}(\mathbf{Z}):=\mathbf{Z}-\mathbb{E}\mathbf{Z}$,
$\ekl l(\mathbf{Z}):=\lhatl l(\mathbf{Z})-\mathbb{E}_{\mathbf{Z}}\left(\lhatl l(\mathbf{Z})\right)$,
and $\ehatl il(\mathbf{Z}):=\fhatl il(\mathbf{Z})-\mathbb{E}_{\mathbf{Z}}\fhatl il(\mathbf{Z})$.
By forming a Taylor series expansion of $g\left(\lhatl l(\mathbf{Z})\right)$
around $\mathbb{E}_{\mathbf{Z}}\lhatl l(\mathbf{Z})$, we get \[
g\left(\lhatl l(\mathbf{Z})\right)=\sum_{i=0}^{\lambda-1}\frac{g^{(i)}\left(\mathbb{E}_{\mathbf{Z}}\lhatl l(\mathbf{Z})\right)}{i!}\ekl l^{i}(\mathbf{Z})+\frac{g^{(\lambda)}\left(\mathbf{\xi_{Z}}\right)}{\lambda!}\ekl l^{\lambda}(\mathbf{Z}),\]
 where $\mathbf{\xi_{Z}}\in\left(\mathbb{E}_{\mathbf{Z}}\ekl l(\mathbf{Z}),\ekl l(\mathbf{Z})\right)$.
Let $\Psi(\mathbf{Z})=g^{(\lambda)}\left(\mathbf{\xi_{Z}}\right)/\lambda!$
and \begin{eqnarray*}
\gtay pil & := & \mathcal{M}\left(g\left(\bE_{\mathbf{X}_{i}}\lhatl l(\mathbf{X}_{i})\right)\right),\\
\mathbf{q}_{i}^{(l)} & := & \mathcal{M}\left(g'\left(\mathbb{E}_{\mathbf{X}_{i}}\lhatl l\left(\mathbf{X}_{i}\right)\right)\ekl l\left(\mathbf{X}_{i}\right)\right),\\
\gtay ril & := & \mathcal{M}\left(\sum_{j=2}^{\lambda-1}\frac{g^{(j)}\left(\mathbb{E}_{\mathbf{X}_{i}}\lhatl l\left(\mathbf{X}_{i}\right)\right)}{j!}\ekl l^{j}\left(\mathbf{X}_{i}\right)\right),\\
\gtay sil & := & \mathcal{M}\left(\Psi\left(\mathbf{X}_{i}\right)\ekl l^{\lambda}\left(\mathbf{X}_{i}\right)\right).\end{eqnarray*}
Then\[
Cov\left[g\left(\lhatl l(\mathbf{X}_{i})\right),g\left(\lhatl{l'}(\mathbf{X}_{j})\right)\right]\]
 \begin{equation}
=\bE\left[\left(\gtay pil+\gtay qil+\gtay ril+\gtay sil\right)\left(\gtay pj{l'}+\gtay qj{l'}+\gtay rj{l'}+\gtay sj{l'}\right)\right].\label{eq:cov_g}\end{equation}

To obtain expressions for $\ekl l^{i}\left(\mathbf{Z}\right)$, we
expand $\lhatl l(\mathbf{Z})$ around $\mathbb{E}_{\mathbf{Z}}\fhatl 1l(\mathbf{Z})$
and $\mathbb{E}_{\mathbf{Z}}\fhatl 2l(\mathbf{Z})$:

\begin{eqnarray}
\frac{\fhatl 1l(\mathbf{Z})}{\fhatl 2l(\mathbf{Z})} & = & \frac{\mathbb{E}_{\mathbf{Z}}\fhatl 1l(\mathbf{Z})}{\mathbb{E}_{\mathbf{Z}}\fhatl 2l(\mathbf{Z})}+\frac{\ehatl 1l(\mathbf{Z})}{\mathbb{E}_{\mathbf{Z}}\fhatl 2l(\mathbf{Z})}-\mathbb{E}_{\mathbf{Z}}\fhatl 1l(\mathbf{Z})\frac{\ehatl 2l(\mathbf{Z})}{\left(\mathbb{E}_{\mathbf{Z}}\fhatl 2l(\mathbf{Z})\right)^{2}}\nonumber \\
 &  & -\frac{\ehatl 1l(\mathbf{Z})\ehatl 2l(\mathbf{Z})}{\left(\mathbb{E}_{\mathbf{Z}}\fhatl 2l(\mathbf{Z})\right)^{2}}+\mathbb{E}_{\mathbf{Z}}\fhatl 1l(\mathbf{Z})\frac{\ehatl 2l^{2}(\mathbf{Z})}{2\left(\mathbb{E}_{\mathbf{Z}}\fhatl 2l(\mathbf{Z})\right)^{3}}\nonumber \\
 &  & +\frac{\ehatl 1l(\mathbf{Z})\ehatl 2l^{2}(\mathbf{Z})}{2\left(\mathbb{E}_{\mathbf{Z}}\fhatl 2l(\mathbf{Z})\right)^{3}}+o\left(\ehatl 2l^{2}(\mathbf{Z})+\ehatl 1l(\mathbf{Z})\ehatl 2l^{2}(\mathbf{Z})\right)\label{eq:lhat_taylor}\\
 & = & \frac{\mathbb{E}_{\mathbf{Z}}\fhatl 1l(\mathbf{Z})}{\mathbb{E}_{\mathbf{Z}}\fhatl 2l(\mathbf{Z})}+h(\ehatl 1l(\mathbf{Z}),\ehatl 2l(\mathbf{Z})).\nonumber \end{eqnarray}
Let $\mathbf{h}(\mathbf{Z})=h(\ehatl 1l(\mathbf{Z}),\ehatl 2l(\mathbf{Z})).$
Thus $\ekl l(\mathbf{Z})=\frac{\mathbb{E}_{\mathbf{Z}}\fhatl 1l(\mathbf{Z})}{\mathbb{E}_{\mathbf{Z}}\fhatl 2l(\mathbf{Z})}-\mathbb{E}_{\mathbf{Z}}\lhatl l(\mathbf{Z})+\mathbf{h}(\mathbf{Z}).$
By the binomial theorem, \begin{equation}
\ekl l^{q}(\mathbf{Z})=\sum_{j=0}^{q}a_{q,j}\left(\frac{\mathbb{E}_{\mathbf{Z}}\fhatl 1l(\mathbf{Z})}{\mathbb{E}_{\mathbf{Z}}\fhatl 2l(\mathbf{Z})}-\mathbb{E}_{\mathbf{Z}}\lhatl l(\mathbf{Z})\right)^{q-j}\mathbf{h}^{j}(\mathbf{Z}),\label{eq:f_binomial}\end{equation}
where $a_{q,j}$ is the binomial coefficient. Using a Taylor series
expansion of $\frac{1}{x}$ about $\mathbb{E}_{\mathbf{Z}}\fhat 2(\mathbf{Z})$,
\begin{eqnarray}
\mathbb{E}_{\mathbf{Z}}\frac{1}{\fhat 2(\mathbf{Z})} & = & \mathbb{E}_{\mathbf{Z}}\left[\frac{1}{\mathbb{E}_{\mathbf{Z}}\fhat 2(\mathbf{Z})}-\frac{\ehat 2}{\left(\mathbb{E}_{\mathbf{Z}}\fhat 2(\mathbf{Z})\right)^{2}}+\frac{\ehat 2^{2}}{2\xi_{2,\mathbf{Z}}}\right]\nonumber \\
 & = & \frac{1}{\mathbb{E}_{\mathbf{Z}}\fhat 2(\mathbf{Z})}+\frac{\left(\var_{\mathbf{Z}}\left[\fhat 2(\mathbf{Z})\right]\right)}{2\xi_{2,\mathbf{Z}}}\nonumber \\
 & = & \frac{1}{\mathbb{E}_{\mathbf{Z}}\fhat 2(\mathbf{Z})}+c_{3,2}(\mathbf{Z})\left(\frac{1}{k_{2}}\right),\label{eq:kde}\end{eqnarray}
where $\xi_{2,\mathbf{Z}}\in\left(\mathbb{E}_{\mathbf{Z}}\fhat 2(\mathbf{Z}),\fhat 2(\mathbf{Z})\right)$
from the mean value thoerem and we use the fact that the variance
of the kernel density estimate converges to zero with rate $\frac{1}{M\sigma}$
where $\sigma=O\left(\frac{k(l)}{M}\right)$. Sricharan et al~\cite{sricharan2013ensemble}
showed that for a truncated uniform kernel density estimator with
bandwidth $\left(k/M\right)^{1/d}$, $\mathbb{E}_{\mathbf{Z}}\fhatl il(\mathbf{Z})=f_{i}(\mathbf{Z})+\sum_{j=1}^{d}c_{i,j,k(l)}(\mathbf{Z})\left(\frac{k(l)}{M}\right)^{j/d}+o\left(\frac{k(l)}{M}\right)=f_{i}(\mathbf{Z})+c_{1,i}(\mathbf{Z},k(l),M)=f_{i}(\mathbf{Z})+o(1).$
It can then be shown that the $k$-nn density estimator converges
to a truncated uniform kernel density estimator~\cite{kumar2012thesis}.\textbf{\emph{
}}Thus the result holds for the $k$-nn density estimator as well.
Combining this with Eq.~\ref{eq:kde} gives \begin{eqnarray}
 &  & \left(\frac{\mathbb{E}_{\mathbf{Z}}\fhatl 1l(\mathbf{Z})}{\mathbb{E}_{\mathbf{Z}}\fhatl 2l(\mathbf{Z})}-\mathbb{E}_{\mathbf{Z}}\lhat(\mathbf{Z})\right)^{q}\nonumber \\
 & = & \left(\mathbb{E}_{\mathbf{Z}}\fhatl 1l(\mathbf{Z})c_{3,2}(\mathbf{Z})\left(\frac{1}{k(l)}\right)\right)^{q}\\
 & = & \left(f_{1}(\mathbf{Z})c_{3,2}(\mathbf{Z})\left(\frac{1}{k(l)}\right)+\sum_{j=1}^{d}c_{1,j,k(l)}\left(\frac{k(l)}{M}\right)^{\frac{j}{d}}\left(\frac{1}{k(l)}\right)+o\left(\frac{1}{M}\right)\right)^{q}\nonumber \\
 & = & 1_{\{q=1\}}c_{3}(\mathbf{Z})\left(\frac{1}{k(l)}\right)+1_{\{q\geq2\}}O\left(\frac{1}{k(l)^{q}}\right)+o\left(\frac{1}{M}\right)=:\bz q(\mathbf{Z}).\label{eq:bias_final1}\end{eqnarray}
Combining Eqs.~\ref{eq:lhat_taylor}, \ref{eq:f_binomial}, and \ref{eq:bias_final1}
gives \begin{eqnarray}
\ekl l^{q}(\mathbf{Z}) & = & \bz ql(\mathbf{Z})+\bz{q-1}l^{1_{\{q\geq2\}}}(\mathbf{Z})a_{q,1}\mathbf{h}(\mathbf{Z})+1_{\{q\geq2\}}\bz{q-2}l^{^{1_{\{q\geq3\}}}}(\mathbf{Z})a_{q,2}\mathbf{h}^{2}(\mathbf{Z})\nonumber \\
 &  & +1_{\{q\geq3\}}\bz{q-3}l^{1_{\{q\geq4\}}}\mathbf{Z})O\left(\mathbf{h}^{3}(\mathbf{Z})\right)\label{eq:Fk_q}\end{eqnarray}
where \begin{eqnarray*}
\mathbf{h}(\mathbf{Z}) & = & \frac{\ehatl 1l(\mathbf{Z})}{\ez\fhatl 2l(\mathbf{Z})}-\frac{\ez\fhatl 1l(\mathbf{Z})}{\left(\ez\fhatl 2l(\mathbf{Z})\right)^{2}}\ehatl 2l(\mathbf{Z})-\frac{\ehatl 1l(\mathbf{Z})\ehatl 2l(\mathbf{Z})}{\left(\ez\fhatl 2l(\mathbf{Z})\right)^{2}}\\
 &  & +\frac{\ez\fhatl 1l(\mathbf{Z})}{2\left(\ez\fhatl 2l(\mathbf{Z})\right)^{3}}\ehatl 2l^{2}(\mathbf{Z})+o\left(\ehatl 2l^{2}(\mathbf{Z})\right),\\
\mathbf{h}^{2}(\mathbf{Z}) & = & \frac{\ehatl 1l^{2}(\mathbf{Z})}{\left(\ez\fhatl 2l(\mathbf{Z})\right)^{2}}+\frac{\left(\ez\fhatl 1l(\mathbf{Z})\right)^{2}}{\left(\ez\fhatl 2l(\mathbf{Z})\right)^{4}}\ehatl 2l^{2}(\mathbf{Z})\\
 &  & +O\left(\ehatl 1l(\mathbf{Z})\ehat 2(\mathbf{Z})+\ehatl 2l^{3}(\mathbf{Z})\right),\\
O\left(\mathbf{h}^{3}(\mathbf{Z})\right) & = & O\left(\ehatl 1l^{3}(\mathbf{Z})+\ehatl 2l^{3}(\mathbf{Z})+\ehatl 1l^{2}(\mathbf{Z})\ehatl 2l^{2}(\mathbf{Z})\right).\end{eqnarray*}

We now obtain bounds on the expected value of products of the $\ehatl il$
terms:
\begin{lem}
\label{lem:ekhat}Let $l,l'\in\bar{l}$ be fixed, $M_{1}=M_{2}=M$,
and $k(l)=l\sqrt{M}$. Let $\gamma(z)$ be an arbitrary function with
$\sup_{z}|\gamma(z)|<\infty.$ Let $\mathbf{Z}$ be a realization
of the density $f_{2}$ independent of $\fhatl il$ and $\fhatl i{l'}$
for $i=1,2$. Then,\begin{eqnarray}
\mathbb{E}\left[\gamma(\mathbf{Z})\ehatl il^{q}(\mathbf{Z})\right] & = & \begin{cases}
1_{\{q=2\}}\left(c_{2,i}(\gamma(z))\left(\frac{1}{k(l)}\right)+o\left(\frac{1}{k(l)}\right)\right)+1_{\{q\geq3\}}O\left(\frac{1}{k(l)^{\frac{q}{2}}}\right), & q\geq2\\
0, & q=1,\end{cases}\label{eq:moment}\end{eqnarray}
\begin{equation}
\mathbb{E}\left[\gamma(\mathbf{Z})\ehatl il^{q}(\mathbf{Z})\ehatl i{l'}^{r}(\mathbf{Z})\right]=\begin{cases}
O\left(\frac{1}{k(l)^{\frac{q}{2}}k(l')^{\frac{r}{2}}}\right), & q+r\geq2\\
0, & otherwise\end{cases}\label{eq:cross_moment}\end{equation}
\begin{eqnarray}
 &  & \mathbb{E}\left[\gamma(\mathbf{Z})\ehatl 1l^{q}(\mathbf{Z})\ehatl 1{l'}^{q'}(\mathbf{Z})\ehatl 2l^{r}(\mathbf{Z})\ehatl 2{l'}^{r'}(\mathbf{Z})\right]\nonumber \\
 & = & \begin{cases}
0, & q+q'=1\, or\, r+r'=1\\
O\left(\frac{1}{k(l)^{\frac{q+r}{2}}k(l')^{\frac{q'+r'}{2}}}\right), & otherwise\end{cases}\label{eq:moment3}\\
\mathbb{E}\left[\gamma(\mathbf{Z})\ekl l^{q}(\mathbf{Z})\right] & = & 1_{\{q=1\}}O\left(\frac{1}{k(l)}\right)+1_{\{q\geq2\}}O\left(\frac{1}{k(l)^{\frac{q}{2}}}\right).\label{eq:moment4}\end{eqnarray}
\end{lem}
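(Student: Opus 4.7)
The plan is to combine three ingredients: a Beta-distribution representation of the $k$-nn density estimator, conditional independence of the $f_1$ and $f_2$ sample sets given the evaluation point $\mathbf{Z}$, and the expansion of $\ekl l^q$ recorded in Eq.~\ref{eq:Fk_q}. Throughout I condition on $\mathbf{Z}$ so that $\ehatl il(\mathbf{Z})$ is literally centered, then take outer expectation, using $\sup|\gamma|<\infty$ to absorb the $\gamma$ factor into a constant at the end.

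First I tackle (\ref{eq:moment}). Centering gives the statement for $q=1$ directly. For $q\geq 2$, the quantity $M_i\bar{c}\,\rho_{i,k(l)}^d(\mathbf{Z})\,f_i(\mathbf{Z})$ is approximately Beta$(k(l),M_i-k(l)+1)$, so $\fhatl il(\mathbf{Z})$ is a scaled reciprocal-Beta variate modulo a smoothness bias of order $c_{1,i}(\mathbf{Z},k(l),M)$. Using the central-moment estimates $\mathbb{E}[(\mathbf{P}_i-\mathbb{E}\mathbf{P}_i)^q]=O((k(l)/M^2)^{q/2})$ together with the tail-probability bound $\mathcal{C}(k)$ from $(\mathcal{A}.5)$ to control the event that $\mathbf{P}_i$ strays far from its mean, one obtains $\mathbb{E}_{\mathbf{Z}}[\ehatl il^q(\mathbf{Z})]=O(k(l)^{-q/2})$, with the $q=2$ leading coefficient $c_{2,i}(\gamma(z))$ extracted explicitly from the Beta variance. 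Integrating over $\mathbf{Z}$ preserves the rate.

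Second, (\ref{eq:cross_moment}) follows from Cauchy-Schwarz: $|\mathbb{E}[\gamma(\mathbf{Z})\ehatl il^{q}(\mathbf{Z})\ehatl i{l'}^{r}(\mathbf{Z})]|\leq \sup|\gamma|\,\sqrt{\mathbb{E}[\ehatl il^{2q}(\mathbf{Z})]\,\mathbb{E}[\ehatl i{l'}^{2r}(\mathbf{Z})]}$, and (\ref{eq:moment}) applied to each factor delivers the product rate $O(k(l)^{-q/2}k(l')^{-r/2})$. The case $q+r=1$ reduces to a first moment of a centered variable and vanishes. For (\ref{eq:moment3}), the decisive observation is that the sample set $\{\mathbf{Y}_j\}$ defining $\fhatl 1l,\fhatl 1{l'}$ is independent of $\{\mathbf{X}_{N+1},\dots,\mathbf{X}_T\}$ defining $\fhatl 2l,\fhatl 2{l'}$, and both are independent of $\mathbf{Z}$. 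Conditioning on $\mathbf{Z}$ therefore factors the joint expectation into $\mathbb{E}_{\mathbf{Z}}[\ehatl 1l^{q}\ehatl 1{l'}^{q'}]\cdot\mathbb{E}_{\mathbf{Z}}[\ehatl 2l^{r}\ehatl 2{l'}^{r'}]$; each factor is of the form (\ref{eq:cross_moment}), and if $q+q'=1$ or $r+r'=1$ the corresponding factor reduces to a single centered term and is zero.

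Finally, for (\ref{eq:moment4}), substitute the expansion of $\ekl l^q$ from Eq.~\ref{eq:Fk_q} into $\mathbb{E}[\gamma(\mathbf{Z})\ekl l^q(\mathbf{Z})]$. The deterministic piece $\bz ql(\mathbf{Z})$ contributes $O(1/k(l))$ when $q=1$ and $O(1/k(l)^q)$ when $q\geq 2$ by its own definition. Every remaining summand is a bounded deterministic prefactor times a product of powers of $\ehatl 1l(\mathbf{Z})$ and $\ehatl 2l(\mathbf{Z})$, and (\ref{eq:moment3}) with $l'=l$ (absorbing the prefactor into $\gamma$) gives the matching rate $O(k(l)^{-q/2})$. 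The main obstacle is the first step: obtaining the sharp $q\geq 2$ moment bound (\ref{eq:moment}) with the explicit $q=2$ constant. This requires the Beta representation of nearest-neighbor distances under the regime $k(l)=l\sqrt{M}$ together with the verification that the tail events controlled by $\mathcal{C}(k)$ contribute only lower-order terms; once this is established, the remainder of the lemma follows mechanically from Cauchy-Schwarz, independence of the $f_1$ and $f_2$ samples, and the binomial expansion (\ref{eq:f_binomial}) already in hand.
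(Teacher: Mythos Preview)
Your proposal is correct and agrees with the paper on (\ref{eq:moment3}) and (\ref{eq:moment4}): both proofs factor through conditional independence of the two sample sets given $\mathbf{Z}$, and both obtain (\ref{eq:moment4}) by plugging the expansion (\ref{eq:Fk_q}) back in. The route diverges on the first two displays. For (\ref{eq:moment}) the paper does not work with the Beta representation directly; it instead invokes the truncated uniform kernel estimator with bandwidth $(k(l)/M)^{1/d}$, bounds $\mathbb{E}_{\mathbf{Z}}\ehatl il^{q}(\mathbf{Z})$ via concentration (Chernoff) inequalities for that estimator, and then transfers the result to the $k$-nn estimator by the known convergence of the latter to the former. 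Your Beta-moment argument reaches the same rate and is self-contained, at the cost of having to redo what the paper simply cites from prior work. For (\ref{eq:cross_moment}) the difference is sharper: the paper introduces the high-probability event $\natural_{i,l}(X)\cap\natural_{i,l'}(X)$ on which $\ehatl il=O(k(l)^{-\delta/2})$ pointwise, and splits the expectation over this event and its complement. Your Cauchy--Schwarz argument, bounding by $\sqrt{\mathbb{E}[\ehatl il^{2q}]\,\mathbb{E}[\ehatl i{l'}^{2r}]}$ and feeding in (\ref{eq:moment}), is shorter and entirely adequate for the stated $O$-bound; the paper's event-splitting machinery is not needed here, though it is reused later (e.g.\ in Lemma~\ref{lem:covariance_ek}) where a pointwise handle on $\ehatl il$ under $\natural$ is genuinely required.
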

\begin{proof}
For $i=2$, Eq.~\ref{eq:moment} is given and proved as Lemma 5 in~\cite{sricharan2013ensemble}
where the density estimator is a truncated uniform kernel density
estimator with bandwidth $\left(k(l)/M\right)^{1/d}$. The proof uses
concentration inequalities to bound $\ez\ehatl 2l^{q}(\mathbf{Z})$
in terms of $k(l)$. Then since the truncated uniform kernel density
estimator converges to the $k$-nn estimator, it holds for the $k$-nn
estimator as well. For $i=1,$ the proof follows the same procedure
but results in a different constant.

Equation~\ref{eq:cross_moment} is proved in a similar manner. Let
$S_{l}(X):=\left\{ Y\in\mathcal{S}:||X-Y||_{\infty}\leq\left(k(l)/M\right)^{1/d}/2\right\} $,
$V_{l}(X):=\int_{S_{l}(X)}dz$, $U_{i,l}(X):=Pr(\mathbf{Z}\in S_{l}(X))$
where $\mathbf{Z}$ is drawn from $f_{i}$, and $\mathbf{1}_{i,l}(X)$
denote the number of samples from the $i$th distribution that fall
in $S_{l}(X)$; i.e. the number of samples from$\left\{ \mathbf{Y}_{1},\dots,\mathbf{Y}_{M}\right\} $
if $i=1$ or $\left\{ \mathbf{X}_{N+1},\dots,\mathbf{X}_{N+M}\right\} $
if $i=2$ that fall in $S_{l}(X)$. The uniform kernel density estimator
is then \[
\ft il(X)=\frac{\mathbf{1}_{i,l}(X)}{MV_{l}(X)}.\]
Let $\natural_{i,l}(X)$ denote the event $(1-p_{k(l)})MU_{i,l}(X)<\mathbf{1}_{i,l}(X)<(1+p_{k(l)})MU_{i,l}(X)$,
where $p_{k(l)}=1/k(l)^{\delta/2}$. It can be shown~\cite{sricharan2013ensemble}
using standard Chernoff inequalities that $Pr(\natural_{i,l}^{C}(X))=O\left(e^{-p_{k(l)}^{2}k(l)}\right)$
and that under the event $\natural_{i,l}(X)$, $\ehatl il=O(1/(k^{\delta/2}))$.
Thus \begin{eqnarray*}
\mathbb{E}\left[\gamma(\mathbf{Z})\ehatl il^{q}(\mathbf{Z})\ehatl i{l'}^{r}(\mathbf{Z})\right] & = & \bE\left[\gamma(\mathbf{Z})1_{\natural_{i,l}(X)\cap\natural_{i,l'}(X)}\ehatl il^{q}(\mathbf{Z})\ehatl i{l'}^{r}(\mathbf{Z})\right]\\
 &  & +\bE\left[\gamma(\mathbf{Z})1_{\{\natural_{i,l}(X)\cap\natural_{i,l'}(X)\}^{C}}\ehatl il^{q}(\mathbf{Z})\ehatl i{l'}^{r}(\mathbf{Z})\right]\\
 & = & O\left(\frac{1}{k(l)^{\frac{q}{2}}k(l')^{\frac{r}{2}}}\right),\end{eqnarray*}
where we use the fact that $\delta$ can be chosen arbitrarily close
to $1$.

For Eq.~\ref{eq:moment3}, note that due to conditional independence
and Eq.~\ref{eq:cross_moment}, \begin{eqnarray*}
 &  & \mathbb{E}\left[\gamma(\mathbf{Z})\ehatl 1l^{q}(\mathbf{Z})\ehatl 1{l'}^{q'}(\mathbf{Z})\ehatl 2l^{r}(\mathbf{Z})\ehatl 2{l'}^{r'}(\mathbf{Z})\right]\\
 & = & \mathbb{E}\left[\gamma(\mathbf{Z})\bE_{\mathbf{Z}}\left[\ehatl 1l^{q}(\mathbf{Z})\ehatl 1{l'}^{q'}(\mathbf{Z})\right]\bE_{\mathbf{Z}}\left[\ehatl 2l^{r}(\mathbf{Z})\ehatl 2{l'}^{r'}(\mathbf{Z})\right]\right]\\
 & = & \mathbb{E}\left[\gamma(\mathbf{Z})\left(O\left(\frac{1}{k(l)^{\frac{q+r}{2}}k(l')^{\frac{q'+r'}{2}}}\right)\right)\right]\\
 & = & O\left(\frac{1}{k(l)^{\frac{q+r}{2}}k(l')^{\frac{q'+r'}{2}}}\right).\end{eqnarray*}

Equation~\ref{eq:moment4} is obtained by applying Eqs.~\ref{eq:moment}
and \ref{eq:cross_moment} to Eq.~\ref{eq:Fk_q}.
\end{proof}
Lemma~\ref{lem:covariance_ek} provides bounds on the covariance
between the $\ekl l^{q}(\mathbf{Z})$ terms and we restate it here
along with its proof:
\begin{lem}
\label{lem:covariance_ek-1}Let $l,l'\in\bar{l}$ be fixed, $M_{1}=M_{2}=M$,
and $k(l)=l\sqrt{M}$. Let $\gamma_{1}(x),$ $\gamma_{2}(x)$ be arbitrary
functions with $1$ partial derivative wrt $x$ and $\sup_{x}|\gamma_{i}(x)|<\infty,\, i=1,\,2.$
Let $\mathbf{X}_{i}$ and $\mathbf{X}_{j}$ be realizations of the
density $f_{2}$ independent of $\fhatl 1l$, $\fhatl 1{l'}$, $\fhatl 2l$,
and $\fhatl 2{l'}$ and independent of each other when $i\neq j$.
Then \[
Cov\left[\gamma_{1}(\mathbf{X}_{i})\ekl l^{q}(\mathbf{X}_{i}),\gamma_{2}(\mathbf{X}_{j})\ekl{l'}^{r}(\mathbf{X}_{j})\right]=\begin{cases}
o(1), & i=j\\
1_{\{q=1,r=1\}}c_{8}\left(\gamma_{1}(x),\gamma_{2}(x)\right)\left(\frac{1}{M}\right)+o\left(\frac{1}{M}\right), & i\neq j.\end{cases}\]
\end{lem}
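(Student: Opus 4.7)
The plan is to reduce the covariance of the $\ekl{l}^q(\mathbf{X}_i)$ terms to covariances of products of the underlying density error terms $\ehatl{1}{l}$, $\ehatl{2}{l}$, $\ehatl{1}{l'}$, $\ehatl{2}{l'}$, and then apply the moment bounds of Lemma~\ref{lem:ekhat} together with independence of $\mathbf{X}_i$ and $\mathbf{X}_j$. Concretely, using Eq.~\ref{eq:Fk_q}, each $\ekl{l}^q(\mathbf{X}_i)$ can be written as a finite sum $\sum_\alpha \bz{q-|\alpha|}{l}(\mathbf{X}_i)\,\mathbf{h}^{|\alpha|}_l(\mathbf{X}_i)$, and each $\mathbf{h}^p_l(\mathbf{X}_i)$ is a polynomial (with coefficients that are bounded and smooth functions of $\mathbf{X}_i$, by assumption $(\mathcal{A}.1)$ on the lower bound of $f_2$) in $\ehatl{1}{l}(\mathbf{X}_i)$ and $\ehatl{2}{l}(\mathbf{X}_i)$. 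Thus the covariance decomposes as a finite linear combination of terms of the form
\[
\operatorname{Cov}\!\left[\tilde\gamma_1(\mathbf{X}_i)\,\ehatl{1}{l}^{a}(\mathbf{X}_i)\ehatl{2}{l}^{b}(\mathbf{X}_i),\ \tilde\gamma_2(\mathbf{X}_j)\,\ehatl{1}{l'}^{a'}(\mathbf{X}_j)\ehatl{2}{l'}^{b'}(\mathbf{X}_j)\right],
\]
plus bias pieces involving $\bz{\cdot}{l}(\mathbf{X}_i)=O(1/k(l))=O(1/\sqrt{M})$, which are easily handled by Cauchy--Schwarz against the Lemma~\ref{lem:ekhat} moment bounds.

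For the case $i=j$, I would upper bound each such covariance by Cauchy--Schwarz: each factor is bounded in $L^2$ by $O(1/k(l)^{(a+b)/2}+1/k(l')^{(a'+b')/2})$ via Eq.~\ref{eq:cross_moment}, so each term is $o(1)$ as $M\to\infty$. This gives the claimed $o(1)$ bound without needing a sharp constant.

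For the case $i\neq j$, the crucial input is that $\mathbf{X}_i$ and $\mathbf{X}_j$ are independent realizations of $f_2$, independent of the samples defining $\fhatl{1}{l},\fhatl{1}{l'},\fhatl{2}{l},\fhatl{2}{l'}$. Conditioning on $(\mathbf{X}_i,\mathbf{X}_j)$ and applying the tower rule, the covariance of a product of $\ehatl{\cdot}{\cdot}$ terms reduces to the covariance of the corresponding conditional expectations (viewed as functions of $\mathbf{X}_i,\mathbf{X}_j$) plus the expectation of the conditional covariance. Since the density estimates at two distinct query points depend on (mostly disjoint) $k$-nearest neighbor sets, the correlation between $\ehatl{i}{l}(\mathbf{X}_i)$ and $\ehatl{j}{l'}(\mathbf{X}_j)$ is weak. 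I would use the same concentration-on-the-event $\natural_{i,l}$ technique as in the proof of Lemma~\ref{lem:ekhat} to argue that, after conditioning on the typical event on which the number of samples in each kernel ball is close to its expectation, the only contribution of size $1/M$ comes from the product of the leading linear terms in $\mathbf{h}_l(\mathbf{X}_i)$ and $\mathbf{h}_{l'}(\mathbf{X}_j)$ (that is, the case $q=r=1$ with a single power of $\ehatl{\cdot}{\cdot}$ on each side); all higher-order monomials contribute $o(1/M)$ by Eq.~\ref{eq:cross_moment} and Cauchy--Schwarz, using $k(l)=l\sqrt{M}$.

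The main obstacle will be the $q=r=1$ leading term: I must show that the covariance of $\mathbf{h}_l(\mathbf{X}_i)$ and $\mathbf{h}_{l'}(\mathbf{X}_j)$ is exactly $c_8(\gamma_1,\gamma_2)/M + o(1/M)$ with $c_8$ depending smoothly on the densities and the test functions. This will require a careful computation of $\bE[\ehatl{s}{l}(\mathbf{X}_i)\ehatl{s'}{l'}(\mathbf{X}_j)]$ for $s,s'\in\{1,2\}$, splitting on whether the kernel balls around $\mathbf{X}_i$ and $\mathbf{X}_j$ overlap. The integrated overlap probability scales like $k(l)/M=l/\sqrt{M}$, and combining this with the $1/\sqrt{k(l)}$ fluctuations gives the $1/M$ rate. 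The constant $c_8(\gamma_1,\gamma_2)$ is then an explicit smooth functional (an integral against $f_2$) of $\gamma_1,\gamma_2$ and the underlying densities, obtained by the smoothness hypothesis $(\mathcal{A}.2)$ together with the $1$-differentiability of $\gamma_1,\gamma_2$ via a Taylor expansion around $\mathbf{X}_i=\mathbf{X}_j$.
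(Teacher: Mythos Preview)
Your decomposition via Eq.~\ref{eq:Fk_q} and the treatment of the $i=j$ case by Cauchy--Schwarz against the moment bounds of Lemma~\ref{lem:ekhat} are exactly what the paper does, and the reduction of the $i\neq j$ case to covariances of the form $Cov\bigl[\tilde\gamma_{1}(\mathbf{X})\ehatl{1}{l}^{s}(\mathbf{X})\ehatl{2}{l}^{q}(\mathbf{X}),\,\tilde\gamma_{2}(\mathbf{Y})\ehatl{1}{l'}^{t}(\mathbf{Y})\ehatl{2}{l'}^{r}(\mathbf{Y})\bigr]$ is also the right move.

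The gap is in your mechanism for the leading $1/M$ term when $q=r=1$. You locate the $1/M$ contribution in the \emph{overlap} region (probability $\sim k/M$ times fluctuations $\sim 1/k$) and propose to extract $c_{8}$ by a Taylor expansion around $\mathbf{X}_{i}=\mathbf{X}_{j}$. In the paper the dominant $1/M$ contribution comes from the \emph{non-overlap} region $\Psi(l,l')=\{\|X-Y\|_{1}\geq 2(\max(k(l),k(l'))/M)^{1/d}\}$: when the kernel balls around $X$ and $Y$ are disjoint, the counts $\mathbf{1}_{i,l}(X)$ and $\mathbf{1}_{i,l'}(Y)$ are jointly multinomial with parameters $M,\,U_{i,l}(X),\,U_{i,l'}(Y)$, hence negatively correlated with $Cov\bigl[\mathbf{1}_{i,l}(X),\mathbf{1}_{i,l'}(Y)\bigr]=-MU_{i,l}(X)U_{i,l'}(Y)$. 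Pushing this through the density estimates yields, for fixed well-separated $X,Y$, $Cov\bigl[\ehatl{i}{l}(X),\ehatl{i}{l'}(Y)\bigr]=-f_{i}(X)f_{i}(Y)/M+o(1/M)$ (Eq.~\ref{eq:cov_ells_split}). This is a \emph{global} product contribution; integrating against $\gamma_{1}(X)\gamma_{2}(Y)f_{2}(X)f_{2}(Y)$ factors into two one-dimensional integrals and produces the constants $c_{7,i}$ and then $c_{8}$ with no Taylor expansion near the diagonal. The overlap region is handled only by the crude Cauchy--Schwarz bound (Eq.~\ref{eq:cov_ells_split2}) together with the small volume $\int_{\Psi(l,l')^{C}}dy=2^{d}\max(k(l),k(l'))/M$, which shows its contribution is at most $O(1/M)$ and is absorbed into the constant/remainder.

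Two further points your outline glosses over but the paper uses explicitly: (i) for the cross terms with one factor from $\ehatl{1}{\cdot}$ and one from $\ehatl{2}{\cdot}$, conditional independence of the two sample sets kills the covariance exactly (this is why only $c_{7,1}$ and $c_{7,2}$ appear, with no mixed $c_{7,12}$); (ii) the ``1 partial derivative'' hypothesis on $\gamma_{1},\gamma_{2}$ is used not for a diagonal Taylor expansion but to justify $\gamma_{1}(y)=\gamma_{1}(x)+o(1)$ when integrating over the small overlap set in the $I_{1}$ computation. If you rework the $q=r=1$ step around the multinomial identity rather than an overlap heuristic, the rest of your plan goes through as in the paper.
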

\begin{proof}
Throughout the following, assume that $\mathbf{X}$ and $\mathbf{Y}$
are realizations of the density $f_{2}$ independent of each other
and $\fhatl 1l$, $\fhatl 1{l'}$, $\fhatl 2l$, and $\fhatl 2{l'}$.
First consider the case where $i=j$. By Cauchy-Schwarz and Eq.~\ref{eq:moment},
\begin{equation}
Cov\left[\gamma_{1}(\mathbf{X})\ehatl il^{q}(\mathbf{X}),\gamma_{2}(\mathbf{X})\ehatl i{l'}^{r}(\mathbf{X})\right]=O\left(\frac{1}{M^{\frac{q+r}{4}}}\right).\label{eq:cov_ells_inside}\end{equation}
By Eq.~\ref{eq:cross_moment} and Eq.~\ref{eq:moment3}, \begin{equation}
Cov\left[\gamma_{1}(\mathbf{X})\ehatl 1l^{q}(\mathbf{X})\ehatl 2l^{r}(\mathbf{X}),\gamma_{2}(\mathbf{X})\ehatl 1{l'}^{q'}(\mathbf{X})\ehatl 2{l'}^{r'}(\mathbf{X})\right]=O\left(\frac{1}{M^{\frac{q+r+q'+r'}{4}}}\right).\label{eq:cov_ells_inside2}\end{equation}
Applying Eqs.~\ref{eq:cov_ells_inside} and~\ref{eq:cov_ells_inside2}
to Eq.~\ref{eq:Fk_q} completes the proof for this case.

We'll now prove the case where $i\neq j$. Define $\Psi(l,l')=\left\{ \left\Vert X-Y\right\Vert _{1}\geq2\left(\frac{\max(k(l),k(l'))}{M}\right)^{\frac{1}{d}}\right\} $.
For a fixed pair of points $\{X,Y\}\in\Psi(l,l')$, \begin{equation}
Cov\left[\ehatl il^{q}(X),\ehatl i{l'}^{r}(Y)\right]=1_{\{q=r=1\}}\left(\frac{-f_{i}(X)f_{i}(Y)}{M}\right)+o\left(\frac{1}{M}\right).\label{eq:cov_ells_split}\end{equation}
This can be shown in the same way as in the proof of Lemma 6 in~\cite{sricharan2013ensemble}
for a truncated uniform kernel density estimator. This is done by
recognizing that for $\{X,Y\}\in\Psi(l,l')$, the functions $\mathbf{1}_{i,l}(X)$
and $\mathbf{1}_{i,l'}(Y)$ are distributed jointly as a multinomial
random variable with parameters $M$, $U_{i,l}(X)$, $U_{i,l'}(Y)$
and $1-U_{i,l}(X)-U_{i,l'}(Y)$. Equation~\ref{eq:cov_ells_split}
is then established by using the concentration inequality for the
high probability event of $\natural_{i,l}(X)\cap\natural_{i,l'}(Y)$
and then relating the functions $\mathbf{1}_{i,l}(X)$ and $\mathbf{1}_{i,l'}(Y)$
to two binomial random variables with parameters $\{U_{i,l}(X),M-q\}$
and $\{U_{i,l'}(Y),M-r\}$, respectively. Note that the relationship
holds whether $l=l'$ or $l\neq l'$. For fixed $\{X,Y\}\in\Psi(l,l')^{C}$,
Cauchy-Schwarz and Eq.~\ref{eq:moment} give \begin{equation}
Cov\left[\ehatl il^{q}(X),\ehatl i{l'}^{r}(Y)\right]=O\left(\frac{1}{k(l)^{\frac{q}{2}}k(l')^{\frac{r}{2}}}\right).\label{eq:cov_ells_split2}\end{equation}
From Eqs.~\ref{eq:cov_ells_split} and~\ref{eq:cov_ells_split2},
we have that 

\begin{equation}
Cov\left[\gamma_{1}(\mathbf{X})\ehatl il^{q}(\mathbf{X}),\gamma_{2}(\mathbf{Y})\ehatl i{l'}^{r}(\mathbf{Y})\right]=1_{\{q=r=1\}}c_{7,i}(\gamma_{1}(x),\gamma_{2}(x))\left(\frac{1}{M}\right)+o\left(\frac{1}{M}\right).\label{eq:cov_ells_final}\end{equation}
This is proved in the same way as in the proof of Lemma 8 in~\cite{sricharan2013ensemble}
by splitting the covariances into the cases where $\{\mathbf{X},\mathbf{Y}\}\in\Psi(l,l')$
and $\{\mathbf{X},\mathbf{Y}\}\in\Psi(l,l')^{C}$. For the first case,
the bound falls clearly from Eq.~\ref{eq:cov_ells_split}. For the
second case, the bound holds with Eq.~\ref{eq:cov_ells_split2} since
$\int_{\Psi(l,l')^{C}}dy=2^{d}\frac{\max(k(l),k(l'))}{M}$. 

Now let $E_{0}=\{s,q,t,r\geq1\}$, $E_{1,1}=\{s=0,q\geq2,t\geq1,r\geq1\}\cup\{s\geq1,q\geq1,t=0,r\geq2\}$,
and $E_{1,2}=\{s\geq2,q=0,t\geq1,r\geq1\}\cup\{s\geq1,q\geq1,t\geq2,r=0\}$.
For fixed $X$, $Y$, we have by Eqs.~\ref{eq:moment} and~\ref{eq:cross_moment}
and conditional independence when $E_{0},$ $E_{1,1}$, or $E_{1,2}$
hold that \begin{align*}
Cov\left[\gamma_{1}(X)\ehatl 1l^{s}(X)\ehatl 2l^{q}(X),\gamma_{2}(Y)\ehatl 1{l'}^{t}(Y)\ehatl 2{l'}^{r}(Y)\right]\end{align*}

\begin{eqnarray}
 & = & \mathbb{E}\left[\gamma_{1}(X)\ehatl 1l^{s}(X)\ehatl 2l^{q}(X)\gamma_{2}(Y)\ehatl 1{l'}^{t}(Y)\ehatl 2{l'}^{r}(Y)\right]\nonumber \\
 &  & -\mathbb{E}\left[\gamma_{1}(X)\ehatl 1l^{s}(X)\ehatl 2l^{q}(X)\right]\mathbb{E}\left[\gamma_{2}(Y)\ehatl 1{l'}^{t}(Y)\ehatl 2{l'}^{r}(Y)\right]\nonumber \\
 & = & \gamma_{1}(X)\gamma_{2}(Y)\mathbb{E}\left[\ehatl 1l^{s}(X)\ehatl 1{l'}^{t}(Y)\right]\mathbb{E}\left[\ehatl 2l^{q}(X)\ehatl 2{l'}^{r}(Y)\right]\nonumber \\
 &  & +1_{\{q,r,s,t\neq1\}}O\left(\frac{1}{k(l)^{\frac{q+s}{2}}k(l')^{\frac{r+t}{2}}}\right).\label{eq:cov_inside1}\end{eqnarray}
Now $\mathbb{E}\left[\ehatl il^{s}(X)\ehatl i{l'}^{t}(Y)\right]=Cov\left[\ehatl il^{s}(X),\ehatl i{l'}^{t}(Y)\right]+\mathbb{E}\left[\ehatl il^{s}(X)\right]\mathbb{E}\left[\ehatl i{l'}^{t}(Y)\right]$.
By Eqs.~\ref{eq:moment}, \ref{eq:cov_ells_split}, and \ref{eq:cov_ells_split2}
this gives (when $s,t\geq1$) \begin{eqnarray}
\mathbb{E}\left[\ehatl il^{s}(X)\ehatl i{l'}^{t}(Y)\right] & = & 1_{\{s,t\geq2\}}O\left(\frac{1}{k(l)^{\frac{s}{2}}k(l')^{\frac{t}{2}}}\right)\nonumber \\
 &  & +\begin{cases}
1_{\left\{ s=t=1\right\} }\left(\frac{-f_{i}(X)f_{i}(Y)}{M}\right)+o\left(\frac{1}{M}\right), & \{X,Y\}\in\Psi(l,l')\\
O\left(\frac{1}{k(l)^{\frac{s}{2}}k(l')^{\frac{t}{2}}}\right), & \{X,Y\}\in\Psi(l,l')^{c}.\end{cases}\label{eq:cov_inside2}\end{eqnarray}
Now \[
\mathbb{E}\left[Cov_{\mathbf{X,Y}}\left[\gamma_{1}(\mathbf{X})\ehatl 1l^{s}(\mathbf{X})\ehatl 2l^{q}(\mathbf{X}),\gamma_{2}(\mathbf{Y})\ehatl 1{l'}^{t}(\mathbf{Y})\ehatl 2{l'}^{r}(\mathbf{Y})\right]\right]=I_{1}+I_{2},\]
where \[
I_{1}=\mathbb{E}\left[1_{\{\mathbf{X},\mathbf{Y}\}\in\Psi(l,l')^{C}}\gamma_{1}(\mathbf{X})\gamma_{2}(\mathbf{Y})Cov_{\mathbf{X,Y}}\left[\ehatl 1l{}^{s}(\mathbf{X})\ehatl 2l^{q}(\mathbf{X}),\ehatl 1{l'}^{t}(Y)\ehatl 2{l'}^{r}(\mathbf{Y})\right]\right],\]
\[
I_{2}=\mathbb{E}\left[1_{\{\mathbf{X},\mathbf{Y}\}\in\Psi(l,l')}\gamma_{1}(\mathbf{X})\gamma_{2}(\mathbf{Y})Cov_{\mathbf{X,Y}}\left[\ehatl 1l{}^{s}(\mathbf{X})\ehatl 2l^{q}(\mathbf{X}),\ehatl 1{l'}^{t}(Y)\ehatl 2{l'}^{r}(\mathbf{Y})\right]\right].\]
Combining Eqs.~\ref{eq:cov_inside1} and~\ref{eq:cov_inside2} gives
\begin{eqnarray*}
I_{1} & = & \mathbb{E}\left[1_{\{\mathbf{X},\mathbf{Y}\}\in\Psi(l,l')^{C}}\gamma_{1}(\mathbf{X})\gamma_{2}(\mathbf{Y})O\left(\frac{1}{k(l)^{\frac{s+q}{2}}k(l')^{\frac{r+t}{2}}}\right)\right]\\
 & = & \int\left(O\left(\frac{1}{k(l)^{\frac{s+q}{2}}k(l')^{\frac{r+t}{2}}}\right)\left(\gamma_{1}(x)\gamma_{2}(x)+o(1)\right)\right)\left(\int_{\{x,y\}\in\Psi(l,l')^{C}}dy\right)dx\\
 & = & \int\left(O\left(\frac{1}{k(l)^{\frac{s+q}{2}}k(l')^{\frac{r+t}{2}}}\right)\left(\gamma_{1}(x)\gamma_{2}(x)+o(1)\right)\right)\left(2^{d}\frac{\max(k(l),k(l'))}{M}\right)dx\\
 & = & o\left(\frac{1}{M}\right).\end{eqnarray*}
Similarly, \[
I_{2}=\mathbb{E}\left[1_{\{\mathbf{X},\mathbf{Y}\}\in\Psi(l,l')}\gamma_{1}(\mathbf{X})\gamma_{2}(\mathbf{Y})o\left(\frac{1}{M}\right)\right]=o\left(\frac{1}{M}\right),\]
 and so \begin{equation}
Cov\left[\gamma_{1}(\mathbf{X})\ehatl 1l^{s}(\mathbf{X})\ehatl 2l^{q}(\mathbf{X}),\gamma_{2}(\mathbf{Y})\ehatl 1{l'}^{t}(\mathbf{Y})\ehatl 2{l'}^{r}(\mathbf{Y})\right]=o\left(\frac{1}{M}\right).\label{eq:cov_fourterms}\end{equation}

Assume now that neither $E_{0}$, $E_{1,1}$, nor $E_{1,2}$. If either
$q,r=0$ or $s,t=0$ and the remaining exponents are nonzero, then
the left hand side of Eq.~\ref{eq:cov_fourterms} reduces to Eq.~\ref{eq:cov_ells_final}.
For the other cases, suppose that $s,q=0$ and $t,r\geq2$ as an example.
Then we have that \begin{eqnarray*}
Cov\left[\gamma_{1}(\mathbf{X}),\gamma_{2}(\mathbf{Y})\ehatl 1{l'}^{t}(\mathbf{Y})\ehatl 2{l'}^{r}(\mathbf{Y})\right] & = & \mathbb{E}\left[\gamma_{1}(\mathbf{X})\gamma_{2}(\mathbf{Y})\ehat 1^{t}(\mathbf{Y})\ehat 2^{r}(\mathbf{Y})\right]\\
 &  & -\mathbb{E}\left[\gamma_{1}(\mathbf{X})\right]\mathbb{E}\left[\gamma_{2}(\mathbf{Y})\ehat 1^{t}(\mathbf{Y})\ehat 2^{r}(\mathbf{Y})\right]\\
 & = & \mathbb{E}\left[\gamma_{1}(\mathbf{X})\right]\mathbb{E}\left[\gamma_{2}(\mathbf{Y})\ehat 1^{t}(\mathbf{Y})\ehat 2^{r}(\mathbf{Y})\right]\\
 &  & -\mathbb{E}\left[\gamma_{1}(\mathbf{X})\right]\mathbb{E}\left[\gamma_{2}(\mathbf{Y})\ehat 1^{t}(\mathbf{Y})\ehat 2^{r}(\mathbf{Y})\right]\\
 & = & 0.\end{eqnarray*}
 The same result follows for all other cases.

Finally, applying Eqs.~\ref{eq:cov_ells_final} and \ref{eq:cov_fourterms}
to Eq.~\ref{eq:Fk_q} gives \[
Cov\left[\gamma_{1}(\mathbf{X})\ekl l^{q}(\mathbf{X}),\gamma_{2}(\mathbf{Y})\ekl{l'}^{r}(\mathbf{Y})\right]=o\left(\frac{1}{M}\right)+1_{\{q=1,r=1\}}\left(\frac{1}{M}\right)\times\]
\[
\left(c_{7,1}\left(\frac{\gamma_{1}(x)}{\mathbb{E}_{X}\fhatl 2l(x)},\frac{\gamma_{2}(x)}{\mathbb{E}_{X}\fhatl 2{l'}(x)}\right)+c_{7,2}\left(\frac{-\gamma_{1}(x)\mathbb{E}_{X}\fhatl 1l(x)}{\mathbb{E}_{X}\fhatl 2l(x)},\frac{-\gamma_{2}(x)\mathbb{E}_{X}\fhatl 1{l'}(x)}{\mathbb{E}_{X}\fhatl 2{l'}(x)}\right)\right)\]
\[
=1_{\{q=1,r=1\}}c_{8}\left(\gamma_{1}(x),\gamma_{2}(x)\right)\left(\frac{1}{M}\right)+o\left(\frac{1}{M}\right).\]
 Note that this holds even if $l=l'$.
\end{proof}
The following lemma is required to bound the $\Psi(\mathbf{Z})$ term. 
\begin{lem}
\label{lem:bounded}Assume that U(x) is any arbitrary functional which
satisfies\begin{eqnarray*}
(i) & \mathbb{E}\left[\sup_{L\in(p_{l},p_{u})}\left|U\left(L\frac{\mathbf{p}_{2}}{\mathbf{p}_{1}}\right)\right|\right]=G_{1}<\infty,\\
(ii) & \sup_{L\in\left(\frac{q_{l,1}}{q_{u,2}},\frac{q_{u,1}}{q_{l,2}}\right)}\left|U\left(L\right)\right|\mathcal{C}\left(k_{1}\right)\mathcal{C}\left(k_{2}\right)=G_{2}<\infty,\\
(iii) & \mathbb{E}\left[\sup_{L\in\left(\frac{q_{l,1}}{p_{u,2}},\frac{q_{u,1}}{p_{l,2}}\right)}\left|U\left(L\mathbf{p}_{2}\right)\right|\mathcal{C}\left(k_{1}\right)\right]=G_{3}<\infty,\\
(iv) & \mathbb{E}\left[\sup_{L\in\left(\frac{p_{l,1}}{q_{u,2}},\frac{p_{u,1}}{q_{l,2}}\right)}\left|U\left(\frac{L}{\mathbf{p}_{1}}\right)\right|\mathcal{C}\left(k_{2}\right)\right]=G_{4}<\infty.\end{eqnarray*}
Let $\mathbf{Z}$ be $\mathbf{X}_{i}$ for some fixed $i\in\{1,\dots,N\}$
and $\xi_{\mathbf{Z}}$ be any random variable which almost surely
lies in $(L(\mathbf{Z}),\lhat(\mathbf{Z})).$ Then $\mathbb{E}|U(\xi_{\mathbf{Z}})|<\infty.$ \end{lem}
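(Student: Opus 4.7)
The plan is to decompose $\mathbb{E}|U(\xi_{\mathbf{Z}})|$ according to whether the two $k$-nearest-neighbor density estimates concentrate around their true values, using the four conditions $(i)$--$(iv)$ of $(\mathcal{A}.5)$ as covers for the four resulting cases.

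First, since $\xi_{\mathbf{Z}}$ lies almost surely in the open interval with endpoints $L(\mathbf{Z})$ and $\lhat(\mathbf{Z})$, we have $|U(\xi_{\mathbf{Z}})|\le \sup_{t\in J(\mathbf{Z})}|U(t)|$, where $J(\mathbf{Z})$ denotes the closed interval between $L(\mathbf{Z})$ and $\lhat(\mathbf{Z})$. By $(\mathcal{A}.1)$, $L(\mathbf{Z})\in[\epsilon_{0}/\epsilon_{\infty},\epsilon_{\infty}/\epsilon_{0}]$ deterministically, so controlling this sup reduces to controlling the range of $\lhat(\mathbf{Z})$.

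Next I would express $\fhat{i}(\mathbf{Z})$ in terms of the random variable $\mathbf{P}_{i}$ representing the probability mass in the $k_{i}$-NN ball; since $\mathbf{P}_{i}$ is (asymptotically) Beta$(k_{i},M_{i}-k_{i}+1)$ distributed, $\fhat{i}(\mathbf{Z})\approx k_{i}f_{i}(\mathbf{Z})/(M_{i}\mathbf{P}_{i})$, which is the parametrization implicit in $(\mathcal{A}.5)$. Define the concentration events
\[
A_{i}=\left\{ p_{l,i}\le f_{i}(\mathbf{Z})\cdot\tfrac{k_{i}-1}{M_{i}\mathbf{P}_{i}}\le p_{u,i}\right\},\qquad i=1,2,
\]
which, by standard Chernoff tail bounds for the Beta$(k_{i},M_{i}-k_{i}+1)$ distribution, have complement probability of order $\mathcal{C}(k_{i})$. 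On $A_{1}\cap A_{2}$, $\lhat(\mathbf{Z})$ is a bounded perturbation of $L(\mathbf{Z})\cdot(k_{1}M_{2}\mathbf{P}_{2})/(k_{2}M_{1}\mathbf{P}_{1})$ with the multiplicative factor confined to $(p_{l},p_{u})$, which is exactly the domain of the sup in $(i)$. On the complements, the boundedness of the support (diameter $D$) together with $(\mathcal{A}.1)$ still forces $\fhat{i}(\mathbf{Z})\in[q_{l,i},q_{u,i}]$ deterministically, so $\lhat(\mathbf{Z})$ lies in $(q_{l,1}/p_{u,2},q_{u,1}/p_{l,2})$ on $A_{1}^{c}\cap A_{2}$, in $(p_{l,1}/q_{u,2},p_{u,1}/q_{l,2})$ on $A_{1}\cap A_{2}^{c}$, and in $(q_{l,1}/q_{u,2},q_{u,1}/q_{l,2})$ on $A_{1}^{c}\cap A_{2}^{c}$.

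Splitting the expectation along these four disjoint events and using the conditional independence of $\fhat{1}$ and $\fhat{2}$ (which are built from disjoint sample sets) lets me absorb each indicator into the corresponding $\mathcal{C}(k_{i})$ factor on the complement events. Matching the resulting four suprema against the four conditions of $(\mathcal{A}.5)$ then gives $\mathbb{E}|U(\xi_{\mathbf{Z}})|\le G_{1}+G_{2}+G_{3}+G_{4}<\infty$.

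The main obstacle I anticipate is the first step: honestly matching the distribution of the $k$-NN mass $\mathbf{P}_{i}$ to the Beta distribution assumed in $(\mathcal{A}.5)$, and rewriting the sup over $\lhat(\mathbf{Z})$ inside each event into the precise arguments $L\mathbf{P}_{2}/\mathbf{P}_{1}$, $L$, $L\mathbf{P}_{2}$, and $L/\mathbf{P}_{1}$ appearing in $(i)$--$(iv)$ respectively. Boundary effects near $\partial\mathcal{S}$ and the smooth bias of $\fhat{i}$ relative to $f_{i}$ must be controlled so that the heuristic ``$\approx$'' becomes an honest inequality compatible with the constants in $(\mathcal{A}.5)$; the slack parameter $\epsilon$ in the definition of $p_{l,i}$ and $p_{u,i}$ exists precisely to absorb this smooth bias.
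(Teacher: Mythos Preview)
Your proposal is correct and follows essentially the same route as the paper: condition on $\mathbf{Z}$, introduce the Beta-distributed ball masses $\mathbf{P}_{i}$, split into the four events $A_{1}\cap A_{2}$, $A_{1}^{c}\cap A_{2}$, $A_{1}\cap A_{2}^{c}$, $A_{1}^{c}\cap A_{2}^{c}$ with $\Pr(A_{i}^{c})=\Theta(\mathcal{C}(k_{i}))$, bound $\lhat(\mathbf{Z})$ on each piece by the corresponding interval in $(i)$--$(iv)$, and sum to $G_{1}+G_{2}+G_{3}+G_{4}$. One small correction: on $A_{i}^{c}$ the upper bound $\fhat{i}(\mathbf{Z})\le q_{u,i}=(1+\epsilon)\epsilon_{\infty}$ is not a deterministic consequence of bounded support alone (only the lower bound $q_{l,i}$ is); it is a property of the complementary concentration event itself, which the paper imports from prior work on $k$-NN estimators.
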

\begin{proof}
This is a version of Lemma~9 in~\cite{sricharan2013ensemble} modified
to apply to functionals of the likelihood ratio. Because of assumption
$\mathcal{A}.1$, it is sufficient to show that the conditional expectation
$\mathbb{E}\left[|U(\xi_{Z})|\,|\,\mathbf{X}_{1},\dots,\mathbf{X}_{N}\right]<\infty.$ 

First, some properties of $k$-NN density estimators are required.
Let $\mathbf{S}_{k_{i},i}(Z)=\left\{ Y:d(Z,Y)\leq\mathbf{d}_{Z,i}^{(k_{i})}\right\} $
where $\mathbf{d}_{Z,i}^{(k_{i})}$ is the distance to the $k_{i}$th
nearest neighbor of $Z$ from the corresponding set of samples. Then
let $\mathbf{P}_{i}(Z)=\int_{\mathbf{S}_{k_{i},i}(Z)}f_{i}(x)dx$
which has a beta distribution with parameters $k_{i}$ and $M_{i}-k_{i}+1$~\cite{mack1979knn}.
Let $A_{i}(Z)$ be the event that $\mathbf{P}_{i}(Z)<\left(\frac{\sqrt{6}}{k_{i}^{\delta/2}}+1\right)\frac{k_{i}-1}{M_{i}}.$
It has been shown that $Pr\left(A_{i}(Z)^{C}\right)=\Theta\left(\mathcal{C}\left(k_{i}\right)\right)$
and that under $A_{i}(Z)$~\cite{sricharan2012estimation,kumar2012thesis},
\[
\frac{p_{l,i}}{\mathbf{P}_{i}(Z)}<\fhat i(Z)<\frac{p_{u,i}}{\mathbf{P}_{i}(Z)}.\]
It has also been shown that under $A_{i}(Z)^{C}$~\cite{sricharan2012estimation,kumar2012thesis},
\[
q_{l,i}<\fhat i(Z)<q_{u,i}.\]
Let $A(Z)=A_{1}(Z)\cap A_{2}(Z)$ and note that $A_{1}(Z)$ and $A_{2}(Z)$
are independent events. Thus since $\lhat(Z)=\frac{\fhat 1(Z)}{\fhat 2(Z)},$
we have that under $A(Z),$ \[
p_{l}\frac{\mathbf{P}_{2}(Z)}{\mathbf{P}_{1}(Z)}<\lhat(Z)<p_{u}\frac{\mathbf{P}_{2}(Z)}{\mathbf{P}_{1}(Z)}.\]
Now let $Q_{1}(Z)=A_{1}(Z)^{C}\cap A_{2}(Z)^{C},$ $Q_{2}(Z)=A_{1}(Z)^{C}\cap A_{2}(Z),$
and $Q_{3}(Z)=A_{1}(Z)\cap A_{2}(Z)^{C}.$ Then due to independence
and the fact that the $Q_{i}(Z)$s are disjoint, \begin{eqnarray*}
A(Z)^{C} & = & A_{1}(Z)^{C}\cup A_{2}(Z)^{C}=Q_{1}(Z)\cup Q_{2}(Z)\cup Q_{3}(Z),\\
\implies Pr\left(A(Z)^{C}\right) & = & Pr\left(Q_{1}(Z)\right)+Pr\left(Q_{2}(Z)\right)+Pr\left(Q_{3}(Z)\right)\\
 & \leq & \mathcal{C}\left(k_{1}\right)\mathcal{C}\left(k_{2}\right)+\mathcal{C}\left(k_{1}\right)+\mathcal{C}\left(k_{2}\right).\end{eqnarray*}
Then under $Q_{1}(Z)$, $Q_{2}(Z),$ and $Q_{3}(Z)$, respectively,\[
\frac{q_{l,1}}{q_{u,2}}<\lhat(Z)<\frac{q_{u,1}}{q_{l,2}},\]
\[
\frac{q_{l,1}\mathbf{P}_{2}(Z)}{p_{u,2}}<\lhat(Z)<\frac{q_{u,1}\mathbf{P}_{2}(Z)}{p_{l,2}},\]
\[
\frac{p_{l,1}}{\mathbf{P}_{1}(Z)q_{u,2}}<\lhat(Z)<\frac{p_{u,1}}{\mathbf{P}_{1}(Z)q_{l,2}}.\]
Conditioning on $\mathbf{X}_{1},\dots,\mathbf{X}_{N}$ gives \begin{eqnarray*}
\mathbb{E}\left[\left|U(\xi_{Z})\right|\right] & = & \mathbb{E}\left[1_{A(Z)}\left|U(\xi_{Z})\right|\right]+\mathbb{E}\left[1_{Q_{1}(Z)}\left|U(\xi_{Z})\right|\right]+\mathbb{E}\left[1_{Q_{2}(Z)}\left|U(\xi_{Z})\right|\right]+\mathbb{E}\left[1_{Q_{3}(Z)}\left|U(\xi_{Z})\right|\right]\\
 & \leq & Pr(A(Z))\mathbb{E}\left[\sup_{L\in(p_{l},p_{u})}\left|U\left(L\frac{\mathbf{P}_{2}(Z)}{\mathbf{P}_{1}(Z)}\right)\right|\right]+Pr(Q_{1}(Z))\sup_{L\in\left(\frac{q_{l,1}}{q_{u,2}},\frac{q_{u,1}}{q_{l,2}}\right)}\left|U\left(L\right)\right|\\
 &  & +Pr(Q_{1}(Z))\mathbb{E}\left[\sup_{L\in\left(\frac{q_{l,1}}{p_{u,2}},\frac{q_{u,1}}{p_{l,2}}\right)}\left|U\left(L\mathbf{P}_{2}(Z)\right)\right|\right]\\
 &  & +Pr(Q_{1}(Z))\mathbb{E}\left[\sup_{L\in\left(\frac{p_{l,1}}{q_{u,2}},\frac{p_{u,1}}{q_{l,2}}\right)}\left|U\left(\frac{L}{\mathbf{P}_{1}(Z)}\right)\right|\right]\\
 & \leq & \mathbb{E}\left[\sup_{L\in(p_{l},p_{u})}\left|U\left(L\frac{\mathbf{P}_{2}(Z)}{\mathbf{P}_{1}(Z)}\right)\right|\right]+\sup_{L\in\left(\frac{q_{l,1}}{q_{u,2}},\frac{q_{u,1}}{q_{l,2}}\right)}\left|U\left(L\right)\right|\mathcal{C}\left(k_{1}\right)\mathcal{C}\left(k_{2}\right)\\
 &  & +\mathbb{E}\left[\sup_{L\in\left(\frac{q_{l,1}}{p_{u,2}},\frac{q_{u,1}}{p_{l,2}}\right)}\left|U\left(L\mathbf{P}_{2}(Z)\right)\right|\mathcal{C}\left(k_{1}\right)\right]\\
 &  & +\mathbb{E}\left[\sup_{L\in\left(\frac{p_{l,1}}{q_{u,2}},\frac{p_{u,1}}{q_{l,2}}\right)}\left|U\left(\frac{L}{\mathbf{P}_{1}(Z)}\right)\right|\mathcal{C}\left(k_{2}\right)\right]\\
 & = & G_{1}+G_{2}+G_{3}+G_{4}<\infty.\end{eqnarray*}

\end{proof}
The next lemma gives the last result necessary to bound the covariance
of $\mathbf{Y}_{M,1}$ and $\mathbf{Y}_{M,2}$.
\begin{lem}
\label{lem:g_cov}Let $l,l'\in\bar{l}$ be fixed, $M_{1}=M_{2}=M$,
$k(l)=l\sqrt{M}$, and $\lhatl l=\hat{\mathbf{L}}_{k(l),k(l)}$. Let
$\mathbf{X}_{i}$ and $\mathbf{X}_{j}$ be realizations of the density
$f_{2}$ independent of $\fhat 1$ and $\fhat 2$ and independent
of each other when $i\neq j$. Then\[
Cov\left[g\left(\lhatl l(\mathbf{X}_{i})\right),g\left(\lhatl{l'}(\mathbf{X}_{j})\right)\right]\]
 \begin{equation}
=\begin{cases}
\bE\left[\mathbf{p}_{i}^{(l)}\mathbf{p}_{i}^{(l')}\right]+o(1), & i=j\\
c_{8}\left(g'\left(\mathbb{E}_{X}\lhatl l(x)\right),g'\left(\mathbb{E}_{X}\lhatl{l'}(x)\right)\right)\left(\frac{1}{M}\right)+o\left(\frac{1}{M}\right), & i\neq j.\end{cases}\label{eq:cov_ells}\end{equation}
\end{lem}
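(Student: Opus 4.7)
The strategy is to substitute the four-term Taylor decomposition from Eq.~\ref{eq:cov_g} into the covariance and track which of the resulting sixteen cross-expectations survive at leading order. Writing
\[
g\!\left(\lhatl l(\mathbf{X}_i)\right) - \bE\!\left[g\!\left(\lhatl l(\mathbf{X}_i)\right)\right] = \gtay{p}{i}{l} + \gtay{q}{i}{l} + \gtay{r}{i}{l} + \gtay{s}{i}{l},
\]
the target covariance becomes $\sum_{u,v\in\{p,q,r,s\}} \bE\!\left[\mathbf{u}_i^{(l)} \mathbf{v}_j^{(l')}\right]$, and my plan is to isolate the unique dominant summand in each of the two cases.

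\textbf{Case $i = j$.} The product $\bE[\mathbf{p}_i^{(l)} \mathbf{p}_i^{(l')}]$ is exactly the advertised $O(1)$ leading term. Every other cross-product I would argue is $o(1)$: pairings of $\mathbf{p}$ with $\mathbf{q}$ vanish identically by iterated expectation over $\mathbf{X}_i$ since $\bE_{\mathbf{X}_i}[\ekl l(\mathbf{X}_i)] = 0$; pairings of $\mathbf{p}$ with $\mathbf{r}$ or $\mathbf{s}$ reduce, again by iterated expectation, to $O(\bE|\ekl{l'}^q(\mathbf{X}_i)|)$ with $q \geq 2$, which Eq.~\ref{eq:moment4} controls as $o(1)$; and the nine products among $\mathbf{q}, \mathbf{r}, \mathbf{s}$ are exactly covariances of the form $Cov[\gamma_1(\mathbf{X}_i) \ekl l^q(\mathbf{X}_i), \gamma_2(\mathbf{X}_i) \ekl{l'}^r(\mathbf{X}_i)]$ with $\gamma$'s given by bounded derivatives of $g$ evaluated at $\bE_{\mathbf{X}_i}\lhatl l$, each $o(1)$ by Lemma~\ref{lem:covariance_ek-1} (inside case).

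\textbf{Case $i \neq j$.} Independence of $\mathbf{X}_i$ and $\mathbf{X}_j$, together with their independence from the density estimators and $\bE[\mathbf{p}_i^{(l)}] = 0$, kills every product that contains a $\mathbf{p}$ factor. Among the remaining nine products, only $\bE[\mathbf{q}_i^{(l)} \mathbf{q}_j^{(l')}]$ satisfies the $q = r = 1$ trigger in Lemma~\ref{lem:covariance_ek-1}; reading off $\gamma_1(x) = g'(\bE_X \lhatl l(x))$ and $\gamma_2(x) = g'(\bE_X \lhatl{l'}(x))$ from the definition of $\mathbf{q}$ yields the asserted $c_8(\gamma_1, \gamma_2)/M$ leading constant. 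Every other product has total exponent on the $\ekl{}$ factors at least three, so Lemma~\ref{lem:covariance_ek-1} makes it $o(1/M)$.

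\textbf{Main obstacle.} The delicate step is handling the Taylor remainder $\mathbf{s}_i^{(l)}$: its prefactor $\Psi(\mathbf{X}_i) = g^{(\lambda)}(\xi_{\mathbf{X}_i})/\lambda!$ has a random argument and fails to be a bounded deterministic function of $\mathbf{X}_i$, so Lemma~\ref{lem:covariance_ek-1} does not apply directly to any $\mathbf{s}$-containing cross-product. I would split each such product by Cauchy-Schwarz and invoke Lemma~\ref{lem:bounded} with $U = g^{(\lambda)}$, which under assumption $(\mathcal{A}.5)$ secures a uniform moment bound on $\Psi(\mathbf{X}_i)$. Combining this with the high-moment estimate $\bE|\ekl l^{2\lambda}(\mathbf{X}_i)| = O(M^{-\lambda/2})$ from Eq.~\ref{eq:moment4}, and using that $(\mathcal{A}.3)$ with $\beta = 1/2$ forces $\lambda \geq 3$, each $\mathbf{s}$-containing product contributes $O(M^{-\lambda/2}) = o(1/M)$ and is absorbed into the remainder in both cases.
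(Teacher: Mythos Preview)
Your proposal is correct and follows essentially the same route as the paper: expand $g(\lhatl l(\mathbf{X}_i))$ into the four Taylor pieces $\mathbf{p}+\mathbf{q}+\mathbf{r}+\mathbf{s}$, identify $\bE[\mathbf{p}_i^{(l)}\mathbf{p}_i^{(l')}]$ as the sole surviving $O(1)$ term when $i=j$, use independence of $\mathbf{X}_i,\mathbf{X}_j$ to kill all $\mathbf{p}$-containing products when $i\neq j$, read off the $c_8/M$ constant from the $\mathbf{q}$-$\mathbf{q}$ pairing via Lemma~\ref{lem:covariance_ek-1}, and dispose of the remainder $\mathbf{s}$ by Cauchy--Schwarz together with the moment control on $\Psi$ from Lemma~\ref{lem:bounded} under $(\mathcal{A}.5)$. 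One small presentational wrinkle: in your $i=j$ sketch you first fold the $\mathbf{s}$-pairings into the ``nine products handled by Lemma~\ref{lem:covariance_ek-1}'' before correctly flagging in your obstacle paragraph that $\mathbf{s}$-terms fall outside that lemma's scope; the paper likewise treats the $\mathbf{s}$-terms separately via the double Cauchy--Schwarz you describe, so your fix matches theirs.
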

\begin{proof}
Consider the case where $i=j$. Then applying Lemma~\ref{lem:covariance_ek-1}
to Eq.~\ref{eq:cov_g} gives \[
Cov\left[g\left(\lhatl l(\mathbf{X}_{i})\right),g\left(\lhatl{l'}(\mathbf{X}_{j})\right)\right]=\bE\left[\gtay pil\gtay pi{l'}\right]+o(1).\]
Note that $\bE\left[\gtay pil\gtay pi{l'}\right]=O(1)$ since $\gtay pil=\mathcal{M}\left(g\left(\bE_{\mathbf{X}_{i}}\lhatl l(\mathbf{X}_{i})\right)\right)=\mathcal{M}\left(g\left(L(\mathbf{X}_{i})\right)\right)+o(1)$. 

Now let $i\neq j$. Since $\mathbf{X}_{i}$ and $\mathbf{X}_{j}$
are independent, $\mathbb{E}\left[\gtay pil\left(\gtay pj{l'}+\gtay qj{l'}+\gtay rj{l'}+\gtay sj{l'}\right)\right]=0.$
Applying Lemma~\ref{lem:covariance_ek-1} gives \begin{eqnarray*}
\mathbb{E}\left[\gtay qil\gtay qj{l'}\right] & = & c_{8}\left(g'\left(\mathbb{E}_{X}\lhatl l(x)\right),g'\left(\mathbb{E}_{X}\lhatl{l'}(x)\right)\right)\left(\frac{1}{M}\right)+o\left(\frac{1}{M}\right),\\
\mathbb{E}\left[\gtay qil\gtay rj{l'}\right] & = & o\left(\frac{1}{M}\right),\\
\mathbb{E}\left[\gtay ril\gtay rj{l'}\right] & = & o\left(\frac{1}{M}\right).\end{eqnarray*}
We use Cauchy-Schwarz and Lemma~\ref{lem:ekhat} to get\begin{align*}
\mathbb{E}\left[g'\left(\mathbb{E}_{\mathbf{X}_{i}}\lhatl l\left(\mathbf{X}_{i}\right)\right)\ekl l\left(\mathbf{X}_{i}\right)\Psi\left(\mathbf{X}_{j}\right)\ekl{l'}^{\lambda}\left(\mathbf{X}_{j}\right)\right]\end{align*}
 \begin{eqnarray*}
 & \leq & \sqrt{\mathbb{E}\left[\Psi^{2}\left(\mathbf{X}_{j}\right)\right]\mathbb{E}\left[\left(g'\left(\mathbb{E}_{\mathbf{X}_{i}}\lhatl l\left(\mathbf{X}_{i}\right)\right)\ekl l\left(\mathbf{X}_{i}\right)\right)^{2}\ekl{l'}^{2\lambda}\left(\mathbf{X}_{j}\right)\right]}\\
 & \leq & \sqrt{\mathbb{E}\left[\Psi^{2}\left(\mathbf{X}_{j}\right)\right]\sqrt{\mathbb{E}\left[\left(g'\left(\mathbb{E}_{\mathbf{X}_{i}}\lhatl l\left(\mathbf{X}_{i}\right)\right)\ekl l\left(\mathbf{X}_{i}\right)\right)^{4}\right]\mathbb{E}\left[\ekl{l'}^{4\lambda}\left(\mathbf{X}_{j}\right)\right]}}\\
 & = & \sqrt{\mathbb{E}\left[\Psi^{2}\left(\mathbf{X}_{j}\right)\right]\sqrt{O\left(\frac{1}{k(l)^{2}}\right)O\left(\frac{1}{k(l')^{2\lambda}}\right)}}\\
 & = & \sqrt{\mathbb{E}\left[\Psi^{2}\left(\mathbf{X}_{j}\right)\right]}o\left(\frac{1}{k(l')^{\lambda/2}}\right).\end{eqnarray*}
Lemma~\ref{lem:bounded} and assumption $\left(\mathcal{A}.5\right)$
implies that $\mathbb{E}\left[\Psi^{2}\left(\mathbf{X}_{j}\right)\right]=O(1)$
and from assumption $(\mathcal{A}.3),$ $o\left(\frac{1}{k(l')^{\lambda/2}}\right)=o\left(\frac{1}{M}\right).$
This implies that $\mathbb{E}\left[\gtay qil\gtay sj{l'}\right]=o\left(\frac{1}{M}\right).$
Similarly, $\mathbb{E}\left[\gtay ril\gtay sj{l'}\right]=o\left(\frac{1}{M}\right)$
and $\mathbb{E}\left[\gtay sil\gtay sj{l'}\right]=o\left(\frac{1}{M}\right).$
Combining these results with Eq.~\ref{eq:cov_g} completes the proof.
\end{proof}
Applying Lemma~\ref{lem:g_cov} to Eqs.~\ref{eq:y_num} and \ref{eq:y_den}
shows that $Cov(\mathbf{Y}_{M,i},\mathbf{Y}_{M,j})=O\left(\frac{1}{M}\right)$.

For the covariance of $\mathbf{Y}_{M,i}^{2}$ and $\mathbf{Y}_{M,j}^{2}$,
we only need to consider the numerator since we previously showed
that $\var\left[\sum_{l\in\bar{l}}w(l)g\left(\lhatl l(\mathbf{X}_{i})\right)\right]=O(1)+o(1)$.
Assume WLOG that $i=1$ and $j=2$ and let $h_{l}=\bE\left[g\left(\lhatl l(\mathbf{X}_{i})\right)\right]$.
The numerator of the covariance is then\[
\sum_{l\in\bar{l}}\sum_{l'\in\bar{l}}\sum_{j\in\bar{l}}\sum_{j'\in\bar{l}}Cov\left[\left(g\left(\lhatl l(\mathbf{X}_{1})\right)-h_{l}\right)\left(g\left(\lhatl{l'}(\mathbf{X}_{1})\right)-h_{l'}\right),\right.\]
\[
\left.\left(g\left(\lhatl j(\mathbf{X}_{2})\right)-h_{j}\right)\left(g\left(\lhatl{j'}(\mathbf{X}_{2})\right)-h_{j'}\right)\right]\]
\[
=\sum_{l\in\bar{l}}\sum_{l'\in\bar{l}}\sum_{j\in\bar{l}}\sum_{j'\in\bar{l}}Cov\left[\left(\gtay p1l+\gtay q1l+\gtay r1l+\gtay s1l\right)\left(\gtay p1{l'}+\gtay q1{l'}+\gtay r1{l'}+\gtay s1{l'}\right),\right.\]
\[
\left.\left(\gtay p2j+\gtay q2j+\gtay r2j+\gtay s2j\right)\left(\gtay p2{j'}+\gtay q2{j'}+\gtay r2{j'}+\gtay s2{j'}\right)\right].\]
 Let $d_{l}(x)=\left(g(x)-h_{l}\right)^{2}$. Then for the case where
$l=l'$ and $j=j'$, we have \[
Cov\left[d_{l}\left(\lhatl l(\mathbf{X}_{1})\right),d_{j}\left(\lhatl j(\mathbf{X}_{2})\right)\right]=O\left(\frac{1}{M}\right).\]
This follows from Lemma~\ref{lem:g_cov}. 

For the general case, note that due to the independence of $\mathbf{X}_{1}$
and $\mathbf{X}_{2}$, \begin{eqnarray*}
Cov\left[\gtay p1l\gtay p1{l'},\left(g\left(\lhatl j(\mathbf{X}_{2})\right)-h_{j}\right)\left(g\left(\lhatl{j'}(\mathbf{X}_{2})\right)-h_{j'}\right)\right] & = & 0,\\
Cov\left[\left(g\left(\lhatl l(\mathbf{X}_{1})\right)-h_{l}\right)\left(g\left(\lhatl{l'}(\mathbf{X}_{1})\right)-h_{l'}\right),\gtay p2j\gtay p2{j'}\right] & = & 0.\end{eqnarray*}
To bound the remaining terms, we require the following Lemma:
\begin{lem}
\label{lem:cov_Fk4}Let $\gamma_{1}(x),$ $\gamma_{2}(x)$ be arbitrary
functions with $1$ partial derivative wrt $x$ and $\sup_{x}|\gamma_{i}(x)|<\infty,\, i=1,\,2.$
Let $l,l',j,j'\in\bar{l}$ be fixed, $M_{1}=M_{2}=M$, $k(l)=l\sqrt{M}$.
Let $\mathbf{X}$ and $\mathbf{Y}$ be realizations of the density
$f_{2}$ independent of $\fhatl il$, $\fhatl i{l'}$, $\fhatl ij$,
and $\fhatl i{j'}$, $i=1,\,2$. If $q,r,s,t\geq0$ and the cases
$\left\{ t=0,r=0\right\} $ or $\{q=0,s=0\}$ do not hold, then \[
Cov\left[\gamma_{1}(\mathbf{X})\ekl l^{s}(\mathbf{X})\ekl{l'}^{q}(\mathbf{X}),\gamma_{2}(\mathbf{Y})\ekl j^{t}(\mathbf{Y})\ekl{j'}^{r}(\mathbf{Y})\right]=O\left(\frac{1}{M}\right).\]
\end{lem}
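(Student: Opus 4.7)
The plan is to expand both products $\ekl{l}^s(\mathbf{X})\ekl{l'}^q(\mathbf{X})$ and $\ekl{j}^t(\mathbf{Y})\ekl{j'}^r(\mathbf{Y})$ via Eq.~\ref{eq:Fk_q}, thereby reducing the covariance to a finite sum of covariances of products of density error terms at $\mathbf{X}$ and $\mathbf{Y}$, and then to bound each such covariance according to its total degree using a combination of independence, Lemma~\ref{lem:covariance_ek-1}, Lemma~\ref{lem:ekhat}, Cauchy-Schwarz, and a direct third-order multinomial moment calculation.

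Concretely, I first apply Eq.~\ref{eq:Fk_q} to each of the four factors $\ekl{l}^s$, $\ekl{l'}^q$, $\ekl{j}^t$, $\ekl{j'}^r$. Each factor becomes a finite sum in which the $a$-th summand carries a deterministic bias prefactor of order $O(M^{-1/2})$ per missing power of $\mathbf{h}$ and a monomial in the density error terms $\ehatl{i}{\cdot}$ of total degree $a$. Multiplying out and applying bilinearity of the covariance yields a finite sum of terms of the form
\[
b_X(\mathbf{X})\,b_Y(\mathbf{Y}) \cdot Cov\!\left[\gamma_1(\mathbf{X})\,E_X(\mathbf{X}),\;\gamma_2(\mathbf{Y})\,E_Y(\mathbf{Y})\right],
\]
where $b_X, b_Y$ are nonrandom bias prefactors bounded by $O(1)$ and $E_X, E_Y$ are monomials in the density error terms of total degrees $A$ and $B$ respectively.

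I then bound each such covariance by cases on $(A,B)$. If $A=0$ or $B=0$, the side with zero degree depends only on $\mathbf{X}$ (resp.\ $\mathbf{Y}$) and not on the $\hat{\mathbf{f}}$'s, so mutual independence forces the covariance to vanish. If $A=B=1$, the covariance reduces to the single-factor setting of Eq.~\ref{eq:cov_ells_final} (with the case of mismatched density indices factoring by independence of $f_1$- and $f_2$-samples and centering to $0$), giving $O(1/M)$. If $A+B \geq 4$ with $A,B \geq 1$, Cauchy-Schwarz combined with the product-moment bounds of Lemma~\ref{lem:ekhat} (Eqs.~\ref{eq:cross_moment} and~\ref{eq:moment3}) yields $|Cov| \leq 2\sqrt{O(M^{-A/2})\,O(M^{-B/2})} = O(M^{-(A+B)/4}) = O(1/M)$. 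The remaining case $(A,B)\in\{(1,2),(2,1)\}$ is the delicate one, since Cauchy-Schwarz alone only gives $O(M^{-3/4})$.

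For the $A+B=3$ case I would argue as follows. Because the single-error side has mean $0$, the product of marginal means vanishes and $Cov = \mathbb{E}[\cdots]$. If the density indices on the two sides do not all coincide, independence of $f_1$- and $f_2$-samples combined with centering forces the conditional expectation given $\mathbf{X},\mathbf{Y}$ to be $0$. Otherwise I split on the event $\Psi(l,l')$ used in the proof of Lemma~\ref{lem:covariance_ek-1}: on $\Psi$ the defining cells are disjoint, so the relevant cell counts are jointly multinomial, and a direct computation of the resulting third central joint moment yields an inner expectation of $O(1/M)$ which survives integration; on $\Psi^C$, whose $(\mathbf{X},\mathbf{Y})$-measure is $O(k(l)/M) = O(M^{-1/2})$, the Cauchy-Schwarz moment bound $O(M^{-3/4})$ contributes only $O(M^{-5/4})=o(1/M)$. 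Summing the finitely many terms in the expansion and noting that the bias prefactors $b_X,b_Y$ can only improve the bounds yields the claimed $O(1/M)$ estimate. The main obstacle is precisely this $A+B=3$ case: Cauchy-Schwarz alone is too weak, and one must exploit both the centering of $\ehatl{i}{\cdot}$ and the near-independence of density estimates at well-separated points through the multinomial coupling already used for Lemma~\ref{lem:covariance_ek-1}.
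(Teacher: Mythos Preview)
Your overall decomposition via Eq.~\ref{eq:Fk_q} and the case split on the total degree $A+B$ of the density error monomials matches the paper exactly; the cases $A=0$ or $B=0$, $A=B=1$, and $A+B\geq 4$ are handled the same way (independence, Eq.~\ref{eq:cov_ells_final}, and Cauchy--Schwarz with Lemma~\ref{lem:ekhat}, respectively). The genuine difference is in the delicate $A+B=3$ case. The paper does \emph{not} compute a third multinomial moment on the separation event $\Psi$; instead it exploits the fact that $k(l)=O(k(l'))$ to show, via Markov's inequality applied to $\mathbb{E}[(\ehatl 1l(\mathbf{X})-\ehatl 1{l'}(\mathbf{X}))^4]=O(1/M)$, that the event $H=\{|\ehatl 1l(\mathbf{X})-\ehatl 1{l'}(\mathbf{X})|\leq 1\}$ has complement of probability $O(1/M)$. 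On $H$ it replaces $\ehatl 1l(\mathbf{X})\ehatl 1{l'}(\mathbf{X})$ by $\ehatl 1l^2(\mathbf{X})$ up to a bounded correction, reducing to the already established $(q,r)=(2,1)$ case of Eq.~\ref{eq:cov_ells_final}; on $H^C$ boundedness suffices. Your route---noting that the single-error side has zero mean so $Cov=\mathbb{E}$, factoring out mismatched density indices by independence and centering, and then computing the third joint central moment directly from the multinomial coupling on $\Psi$---is also correct (the third central moment scales like $M\,U_l(X)U_j(Y)=O(k^2/M)$, which after normalizing by $M^3V_l^2V_j$ gives $O(1/(Mk))=o(1/M)$, and the $\Psi^C$ contribution is handled as you describe). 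One small point to be careful about: when the two error factors at $\mathbf{X}$ carry different bandwidths $l\neq l'$, the cells $S_l(\mathbf{X})$ and $S_{l'}(\mathbf{X})$ are nested rather than disjoint, so your multinomial must be set up on the partition $S_l(\mathbf{X}),\,S_{l'}(\mathbf{X})\setminus S_l(\mathbf{X}),\,S_j(\mathbf{Y})$ and the separation event should use $\max$ over all bandwidths involved; this is a bookkeeping issue, not a gap. Your argument is somewhat more self-contained (it does not rely on comparing estimators at different bandwidths), while the paper's Markov trick avoids redoing any moment computation by reducing to Eq.~\ref{eq:cov_ells_final}.
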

\begin{proof}
Under certain conditions, by Cauchy-Schwarz and Eq.~\ref{eq:moment3}
we have \[
Cov\left[\gamma_{1}(\mathbf{X})\ehatl 1l^{s}(\mathbf{X})\ehatl 2l^{q}(\mathbf{X})\ehatl 1{l'}^{s'}(\mathbf{X})\ehatl 2{l'}^{q'}(\mathbf{X}),\right.\]
\[
\left.\gamma_{2}(\mathbf{Y})\ehatl 1j^{t}(\mathbf{Y})\ehatl 2j^{r}(\mathbf{Y})\ehatl 1{j'}^{t'}(\mathbf{Y})\ehatl 1{j'}^{r'}(\mathbf{Y})\right]\]
\[
\leq\bE\left[\gamma_{1}(\mathbf{X})\gamma_{2}(\mathbf{Y})\sqrt{\var_{\mathbf{X}}\left[\ehatl 1l^{s}(\mathbf{X})\ehatl 2l^{q}(\mathbf{X})\ehatl 1{l'}^{s'}(\mathbf{X})\ehatl 2{l'}^{q'}(\mathbf{X})\right]}\right.\]
\[
\left.\times\sqrt{\var_{\mathbf{Y}}\left[\ehatl 1j^{t}(\mathbf{Y})\ehatl 2j^{r}(\mathbf{Y})\ehatl 1{j'}^{t'}(\mathbf{Y})\ehatl 1{j'}^{r'}(\mathbf{Y})\right]}\right]\]
\begin{equation}
=O\left(\frac{1}{k(l)^{\frac{s+q}{2}}k(l')^{\frac{s'+q'}{2}}k(j)^{\frac{t+r}{2}}k(j')^{\frac{t'+r'}{2}}}\right)=O\left(\frac{1}{M^{\frac{s+q+s'+q'+t+r+t'+r'}{4}}}\right).\label{eq:cov_ek_lots}\end{equation}
Note that the exponents $q,s,r,t$ are not the same as in the statement
of the lemma. The conditions under which this expression holds are
as follows: (1) There must be at least one positive exponent on both
sides of the arguments in the covariance. (2) $\{s+s'+t+t'\neq1\}\cap\{q+q'+r+r'\neq1\}$.
If neither case holds, this reduces to Eq.~\ref{eq:cov_ells_final}.
If only one holds, then the covariance is zero. 

Note that if $s+q+s'+q'+t+r+t'+r'\geq4$, Eq.~\ref{eq:cov_ek_lots}
becomes $O\left(\frac{1}{M}\right)$. Now consider the case where
$\{\{s+s'+t+t'=3\}\cap\{s,s',t,t'\leq1\}\cap\{q,q',r,r'=0\}\}\cup\{\{q+q'+r+r'=3\}\cap\{q,q',r,r'\leq1\}\cap\{s,s',t,t'=0\}\}$.
Assume WLOG that $s,s',t=1$. Then Eq.~\ref{eq:cov_ek_lots} becomes
$O\left(\frac{1}{M^{\frac{3}{4}}}\right)$ which does not decay fast
enough to use Lemma~\ref{lem:clt_covariance-1}. However, we can
use the fact that $k(l)=O(k(l'))$ to obtain a bound of $O\left(\frac{1}{M}\right)$.
By Markov's inequality and Eqs.~\ref{eq:moment} and~\ref{eq:cross_moment},
for fixed $\nu>0$, \begin{eqnarray*}
Pr\left(\left|\ehatl 1l(\mathbf{X})-\ehatl 1{l'}(\mathbf{X})\right|>\nu\right) & \leq & \frac{\bE\left[\left(\ehatl 1l(\mathbf{X})-\ehatl 1{l'}(\mathbf{X})\right)^{4}\right]}{\nu^{4}}\\
 & = & O\left(\frac{1}{M}\right).\end{eqnarray*}
Let $H$ be the event that $\left|\ehatl 1l(\mathbf{X})-\ehatl 1{l'}(\mathbf{X})\right|\leq1$.
This gives \begin{eqnarray}
 &  & Cov\left[\gamma_{1}(\mathbf{X})\ehatl 1l(\mathbf{X})\ehatl 1{l'}(\mathbf{X}),\gamma_{2}(\mathbf{Y})\ehatl 1j(\mathbf{Y})\right]\nonumber \\
 & = & \bE\left[\gamma_{1}(\mathbf{X})\gamma_{2}(\mathbf{Y})\ehatl 1l(\mathbf{X})\ehatl 1{l'}(\mathbf{X})\ehatl 1j(\mathbf{Y})\right]\nonumber \\
 & = & \bE\left[\mathbf{1}_{H}\gamma_{1}(\mathbf{X})\gamma_{2}(\mathbf{Y})\ehatl 1l(\mathbf{X})\ehatl 1{l'}(\mathbf{X})\ehatl 1j(\mathbf{Y})\right]\nonumber \\
 &  & +\bE\left[\mathbf{1}_{H^{C}}\gamma_{1}(\mathbf{X})\gamma_{2}(\mathbf{Y})\ehatl 1l(\mathbf{X})\ehatl 1{l'}(\mathbf{X})\ehatl 1j(\mathbf{Y})\right]\nonumber \\
 & \leq & \bE\left[\mathbf{1}_{H}\gamma_{1}(\mathbf{X})\gamma_{2}(\mathbf{Y})\ehatl 1l^{2}(\mathbf{X})\ehatl 1j(\mathbf{Y})\right]+\bE\left[\mathbf{1}_{H}\gamma_{1}(\mathbf{X})\gamma_{2}(\mathbf{Y})\ehatl 1l(\mathbf{X})\ehatl 1j(\mathbf{Y})\right]\nonumber \\
 &  & +\bE\left[\mathbf{1}_{H^{C}}\gamma_{1}(\mathbf{X})\gamma_{2}(\mathbf{Y})\ehatl 1l(\mathbf{X})\ehatl 1{l'}(\mathbf{X})\ehatl 1j(\mathbf{Y})\right]\nonumber \\
 & = & O\left(\frac{1}{M}\right).\label{eq:cov_ells3}\end{eqnarray}
The final step for the first two terms comes from Eq.~\ref{eq:cov_ells_final}.
The final step for the third term comes from the fact that $Pr(H^{C})=O\left(\frac{1}{M}\right)$
and the fact that $\bE\left[\gamma_{1}(\mathbf{X})\gamma_{2}(\mathbf{Y})\ehatl 1l(\mathbf{X})\ehatl 1{l'}(\mathbf{X})\ehatl 1j(\mathbf{Y})\right]=o(1)$
by Eq.~\ref{eq:cov_ek_lots}.\textbf{\emph{ }}Applying Eqs.~\ref{eq:cov_ells_final},
\ref{eq:cov_fourterms}, \ref{eq:cov_ek_lots}, and \ref{eq:cov_ells3}
to Eq.~\ref{eq:Fk_q} completes the proof.
\end{proof}
From Lemma~\ref{lem:cov_Fk4}, it is clear that \begin{eqnarray*}
Cov\left[\left(\gtay p1l+\gtay q1l+\gtay r1l+\gtay s1l\right)\left(\gtay p1{l'}+\gtay q1{l'}+\gtay r1{l'}+\gtay s1{l'}\right),\right.\\
\left.\left(\gtay p2j+\gtay q2j+\gtay r2j+\gtay s2j\right)\left(\gtay p2{j'}+\gtay q2{j'}+\gtay r2{j'}+\gtay s2{j'}\right)\right] & = & O\left(\frac{1}{M}\right)\\
\implies Cov\left[\mathbf{Y}_{M,i}^{2},\mathbf{Y}_{M,j}^{2}\right] & = & O\left(\frac{1}{M}\right).\end{eqnarray*}
Then by Lemma~\ref{lem:clt_covariance-1}, $\mathbf{S}_{N,M}=\frac{\hat{\mathbf{G}}_{w}-\bE\left[\hat{\mathbf{G}}_{w}\right]}{\sqrt{\var\left[\hat{\mathbf{G}}_{w}\right]}}$
converges in distribution to a standard normal random variable.

\small

\renewcommand{\refname}{\normalsize References}

\bibliographystyle{ieeetr}
\bibliography{divergence_isit2014}

\end{document}